\pgfplotsset{compat=1.15}
\newlength\mylen
\newtheorem{thm}{Theorem}
\newtheorem{prop}[thm]{Proposition}
\newtheorem{lem}[thm]{Lemma}
\newtheorem*{question*}{Question}
\theoremstyle{definition}
\newtheorem{defi}[thm]{Definition}
\theoremstyle{remark} 
\newtheorem{rem}[thm]{Remark}
\newtheorem*{rem*}{Remark}
\theoremstyle{definition}
\newtheorem*{nota*}{Notation}
\newcommand{\X}{{\mathfrak X}}
\newcommand{\GL}{\mathrm{GL}}
\newcommand{\K}{{\mathbf K}}
\newcommand{\KK}{{\overline{\K}}}
\newcommand{\Q}{\overline{\mathbb Q}}
\newcommand{\QQ}{\mathbb Q}
\newcommand{\C}{\mathbb{C}}
\newcommand{\Z}{\mathbb{Z}}
\newcommand{\N}{\mathbb{N}}
\renewcommand{\u}{{\boldsymbol u}}
\renewcommand{\v}{{\boldsymbol v}}
\newcommand{\lambd}{{\boldsymbol{\lambda}}}
\renewcommand{\a}{{\boldsymbol{a}}}
\newcommand{\bxi}{\boldsymbol{\xi}}
\newcommand{\omeg}{{\boldsymbol{\omega}}}
\newcommand{\w}{{\boldsymbol w}}
\newcommand{\e}{{\boldsymbol e}}
\newcommand{\f}{{\boldsymbol f}}
\newcommand{\x}{{\boldsymbol x}}
\newcommand{\puip}{\phi_p}
\newcommand{\I}{\mathcal{I}}
\newcommand{\U}{\mathfrak{U}}
\newcommand{\val}{\operatorname{val}}
\newcommand{\vb}{\val A^{-1}}
\newcommand{\Hahn}{\mathscr{H}}
\newcommand{\Puis}{\mathscr{P}}
\newcommand{\logm}{{\ell}}
\begin{document}
	\title[]{Computing basis of solutions of any Mahler equation}

	\author{Colin Faverjon}
	\address{LAMFA, UMR 7352, Universit\'e Picardie Jules Vernes, 33, rue saint-Leu, 80039 Amiens, France}
	\email{colin.faverjon@math.cnrs.fr}
	\author{Marina Poulet}
	\address{Institut Fourier, UMR 5582, Laboratoire de Math\'ematiques, Universit\'e Grenoble Alpes, CS 40700, 38058 Grenoble cedex 9, France}
	\email{marina.poulet@univ-grenoble-alpes.fr}
	
	\keywords{Mahler equations, algorithm.}
	\subjclass[2020]{Primary 39A06, 68W30; Secondary 11B85, 11Y16}
	\date{\today}

	\begin{abstract} 
		Mahler equations arise in a wide range of contexts including the study of finite automata, regular sequences, algebraic series over $\mathbb F_p(z)$, and periods of Drinfeld modules. Introduced a century ago by K. Mahler to study the transcendence of certain complex numbers, they have recently been the subject of several works establishing a deep connection between such transcendence properties and the nature of their solutions. While numerous studies have investigated these solutions, existing algorithms can only compute them in specific rings: rational functions, power series, Puiseux series, or Hahn series. This paper solves the problem by providing an algorithm that computes a complete basis of solutions for any Mahler equation, along with a decomposition of each solution over the field of Puiseux series. Along the way, we describe an algorithm that computes a fundamental matrix of solutions for any Mahler system.
	\end{abstract}
	\maketitle
	
	\vspace{-1.5cm}
	\setcounter{tocdepth}{1}
	\setcounter{secnumdepth}{3}
	\titlecontents{section}
	[0.5em]
	{}
	{\contentslabel{1.3em}}
	{\hspace*{-2.3em}}
	{\titlerule*[1pc]{.}\contentspage}
	{\addvspace{2em}\bfseries\large}
	\titlecontents{subsection}
	[3.8em]
	{} 
	{\contentslabel{2em}}
	{\hspace*{-3.2em}}
	{\titlerule*[1pc]{.}\contentspage}
	\titlecontents{subsubsection}
	[6.1em]
	{} 
	{\contentslabel{2.4em}}
	{\hspace*{-4.1em}}
	{\titlerule*[1pc]{.}\contentspage}
	\renewcommand{\contentsname}{Contents}
	
	\pdfbookmark[0]{\contentsname}{Contents}
	\tableofcontents

	\section{Introduction}
	
	The study of solutions of linear differential equations is a classical and extensive field of research. Any linear differential equation of order $m$ with coefficients in $\mathbb{C}((z))$ admits a basis of solutions consisting of $\mathbb{C}$-linear combinations of elements of the form
	\begin{equation}\label{eq:diff}
	f(z) z^\alpha (\log z)^k e^{P(z^{-1/m!})}
	\end{equation}
	where $f(z)$ is a Gevrey series, $\alpha \in \mathbb{C}$, $k \in \mathbb{Z}_{\geq 0}$, and $P \in X\C[X]$. Various algorithms have been established for computing such solutions \cite{Tur55, Tou87, Ba97}. 
	
	The subject of our paper is the so-called Mahler equations.
	A linear Mahler equation is an equation of the form
	$$
	a_0(z)y(z)+a_1(z)y(z^p)+\cdots+a_m(z)y(z^{p^m})=0
	$$
	where $p \geq 2$ is an integer, $a_0a_m\neq0$ and $a_0,\ldots,a_m$ are, depending on the context, polynomials, power series, etc. Though less widely known than their differential counterparts, they have been the subject of numerous recent papers due to their connections with several areas, see the discussion in Section~\ref{sec:motivations}. Only quite recently, a general form for a basis of solutions analogous to \eqref{eq:diff} has been established for Mahler equations (see \cite{FR25}). However, the proof is nonconstructive. Algorithms do exist to compute rational, power series, Puiseux series \cite{CDDM18}, or Hahn series \cite{FR24} solutions of a given Mahler equation, but none can provide a full basis of solutions. Our paper fills this gap.

	\subsection{Linear Mahler equations}
	Let $\K$ be a field and $z$ be an indeterminate. For each integer $p\geq 2$, we consider the map $\puip : \K(z) \to \K(z)$ defined by $z \mapsto z^p$. Let $\mathcal R$ be a ring extension of $\K(z)$ and let $\puip$ extend to $\mathcal R$. A $p$-Mahler equation over $\mathcal R$ is a linear functional equation of the form
	\begin{equation}\label{eq:Mahler}
	a_0y+a_1\puip(y)+\cdots + a_m\puip^m(y) =0
	\end{equation}
	where $a_0,\ldots,a_m \in \mathcal R$ and $a_0a_m \neq 0$. In what follows, we focus on the case where $\mathcal R = \K[[z]]$. However, the methods we develop here extend immediately to the case where $\mathcal R$ is the field of Puiseux series over $\K$. 
	
	Power series --or even Puiseux series-- are not sufficient to solve arbitrary Mahler equations over $\K[[z]]$. A famous example comes from the equation
	\begin{equation*}
	y(z^p)-y(z)=z^{-1}
	\end{equation*}
	for which a quick computation reveals that it has no Puiseux series solution. Meanwhile, the formal series $h(z)=\sum_{k\geq 1} z^{-1/p^k}$ is a solution to this equation. This is not a Puiseux series, but since its support is a well-ordered subset of the rational numbers, $h(z)$ is a \textit{Hahn series}. Generalizing this remark, in \cite{FR25}, the authors proved that\footnote{While the results in \cite{FR25} are stated for $\KK:=\Q$, those invoked here hold over the algebraic closure of any field. Since we must reprove these results for algorithmic considerations, we provide details later in the paper.}, given an integer $s\geq 1$, a $s$-dimensional multi-linear recurrence sequence $\u=(u_{k_1,\ldots,k_s})_{k_1,\ldots,k_s}\in \KK^{\Z^{s}}$ and a $s$-tuple of positive rational numbers $\a=(a_1,\ldots,a_s)$, the Hahn series
	\begin{equation}
	\boldsymbol{\xi}_{\omeg} := \sum_{k_1,\cdots,k_s\geq 1} u_{k_1,\ldots,k_s}z^{-\frac{a_1}{p^{k_1}}-\cdots-\frac{a_s}{p^{k_1+\cdots+k_s}}}\quad \text{with }\boldsymbol{\omega}:=(\u,\a)
	\end{equation}
	is a solution to some Mahler equation over $\KK[[z]]$, where we let $\KK$ denote an algebraic closure of $\K$. We let $\Omega$ denote the set of all such tuples $\boldsymbol{\omega}$. For convenience, when $s=0$ we let $\bxi_\omeg=1$. The definition and some details about linear recurrent sequences are given in Section~\ref{sec:Hahn}. When $\KK$ has characteristic $0$, any linear recurrent sequence can be uniquely written as a linear combination of sequences with general term 
	\begin{equation}\label{eq:basis_rec_lin}
	k_1^{\alpha_1}\cdots k_s^{\alpha_s}\lambda_1^{k_1}\cdots \lambda_s^{k_s}, \quad \alpha_1,\ldots,\alpha_s \in \Z_{\geq 0},\,\lambda_1,\ldots,\lambda_s \in \KK^\times.
	\end{equation}
	Hence, we recover the Hahn series introduced in \cite{FR25}. Such a decomposition no longer holds when $\KK$ has positive characteristic.

	Unfortunately, Hahn series fail to provide full basis of solutions of any Mahler equations. The equations $\puip(y)=cy$ with $c \in \KK^\times \setminus\{1\}$ or $\puip(y)=y+1$ have no solutions in the field of Hahn series, as one can check by looking at the hypothetical valuation of such a solution. However, it is possible to find solutions to these equations in some formal ring extension of the field of Hahn series. We denote by $e_c$ a solution to the former and by $\logm$ a solution to the latter, in such an extension. For convenience, we also let $e_1=1$. As mentioned in \cite[Section~2.2]{FP25}, one can construct such a ring extension with field of constants $\{f : \varphi_p(f)=f\}=\KK$. Then, it follows from \cite{FR25} that any Mahler equation over $\K[[z]]$ has a basis of solutions\footnote{A basis of solutions of \eqref{eq:Mahler} is a set of $m$ solutions which are linearly independent over the field of constants $\KK$.} $y_1,\ldots,y_m$ of the form 
	\begin{equation}\label{eq:basis_solutions}
	y_i = \sum_{c \in \KK^\times} \sum_{j \in \Z_{\geq 0}} \sum_{\omeg \in \Omega} f_{i,c,j,\omeg}(z)\bxi_{\omeg} e_c \logm^j
	\end{equation}
	where the $f_{i,c,j,\omeg}$ are Puiseux series with coefficients in $\KK$ and each sum has finite support. When $\K$ is of characteristic $0$, this decomposition is unique if one restricts $\omeg$ to the set $\Omega_0 \subset \Omega$ of tuples $\omeg=(\u,\a)$ for which $\u$ is of the form \eqref{eq:basis_rec_lin} and the entries of $\a$ have denominators coprime with $p$ and numerators not divisible by $p$, see \cite{FR25}. The aim of this article is to provide an algorithm to compute such a basis of solutions.
	\begin{thm}\label{thm:algo_solutions_equations}
		Let $\K$ be an effective field of characteristic $0$.
		Algorithm \ref{algo:basis_solutions} takes as input a Mahler equation\footnote{We assume that we are provided with a way to compute as many coefficients as needed in the power series expansion of $a_0,\ldots,a_m$.} over $\K[[z]]$ of the form \eqref{eq:Mahler} and returns a basis $y_1,\ldots,y_m$ of solutions of \eqref{eq:Mahler} under the form \eqref{eq:basis_solutions}. Precisely, it returns:
		\begin{itemize}
			\item a finite set $K_0 \subset \KK^\times$, an integer $j_0 \geq 0$, and a finite set $\Omega_1 \subset \Omega_0$ such that the supports of the sums \eqref{eq:basis_solutions} are included in $K_0 \times\{0,\ldots,j_0\}\times \Omega_1$,
			\item an integer $d$ and an integer $v$ such that $f_{i,c,j,\omeg}(z) \in z^{-v}\KK[[z^{1/d}]]$ for every $i,c,j,\omeg$,
			\item the coefficients of the Puiseux expansion of each $f_{i,c,j,\omeg}(z)$ up to any prescribed order.
		\end{itemize}
	\end{thm}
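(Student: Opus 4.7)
The plan is to reduce the theorem to the computation of a fundamental matrix of solutions of a Mahler system, which is the main technical achievement developed in the rest of the paper. Given a Mahler equation of the form \eqref{eq:Mahler}, I first pass to its companion system. Since $a_0a_m\neq 0$, dividing by $a_m$ one can rewrite the equation as
\[
\puip^m(y) = -\frac{a_{m-1}}{a_m}\puip^{m-1}(y) - \cdots - \frac{a_0}{a_m}y,
\]
and setting $Y=(y,\puip(y),\ldots,\puip^{m-1}(y))^T$ one obtains a Mahler system $\puip(Y)=AY$ of size $m$ where $A$ is the (invertible) companion matrix, with entries in the field of fractions of $\K[[z]]$.

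Next, I apply to this system the algorithm that computes a fundamental matrix of solutions for any Mahler system. It returns an invertible $m\times m$ matrix $\Phi$ with $\puip(\Phi)=A\Phi$ whose entries are of the form
\[
\sum_{c,j,\omeg} g_{c,j,\omeg}(z)\bxi_\omeg e_c \logm^j,
\]
with $g_{c,j,\omeg}$ Puiseux series, together with a common finite index set $(c,j,\omeg)\in K_0\times\{0,\ldots,j_0\}\times\Omega_1\subset\KK^\times\times\Z_{\geq 0}\times\Omega_0$ and common data $d,v$ with $g_{c,j,\omeg}\in z^{-v}\KK[[z^{1/d}]]$. Let $y_i:=\Phi_{1,i}$ be the first-row entries. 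The companion form of $A$ forces the $i$-th column of $\Phi$ to be $(y_i,\puip(y_i),\ldots,\puip^{m-1}(y_i))^T$, so each $y_i$ solves \eqref{eq:Mahler} and inherits the decomposition \eqref{eq:basis_solutions} with exactly the prescribed data.

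It remains to check $\KK$-linear independence. If $\sum_i c_i y_i=0$ with $c_i\in\KK$, applying $\puip^k$ for $k=0,\ldots,m-1$ and using $\puip(c_i)=c_i$ yields $\sum_i c_i\Phi_{k+1,i}=0$ for every $k$, i.e.\ $\Phi(c_1,\ldots,c_m)^T=0$. Invertibility of $\Phi$ forces all $c_i$ to vanish, so $y_1,\ldots,y_m$ form a basis of solutions. The requested Puiseux coefficients of each $f_{i,c,j,\omeg}$ are read directly from those returned for the corresponding entry of $\Phi$.

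The main obstacle is therefore the fundamental matrix algorithm itself: the reduction and extraction above are essentially mechanical, but the algorithm must simultaneously determine the finite support data $(K_0,j_0,\Omega_1)$, the Puiseux parameters $(d,v)$, and produce an honestly invertible matrix $\Phi$ whose entries admit a common decomposition, while computing Puiseux coefficients to arbitrary order. This is the technical heart of the paper; once it is available, Theorem~\ref{thm:algo_solutions_equations} follows cleanly.
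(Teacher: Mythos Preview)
Your proposal is correct and follows essentially the same approach as the paper: reduce to the companion system, invoke the fundamental-matrix algorithm (Theorem~\ref{thm:algo_solutions_systems}), and read off the first row of $PHe_C$ as the basis of solutions. The paper leaves the linear-independence check implicit in the phrase ``fundamental matrix of solutions,'' whereas you spell it out; otherwise the arguments coincide.
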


	\begin{rem*}
		\label{rem:computation_puiseux_equations}\begin{enumerate}[label={\rm (\roman*)}]
			\item
			The nonzero Puiseux series $f_{i,c,j,\omeg}(z)$ appearing in \eqref{eq:basis_solutions} are also solutions of some $p$-Mahler equations. Such Puiseux series are uniquely determined by their associated equation and their first coefficients. It is possible to adapt Algorithm \ref{algo:basis_solutions} to associate to each $f_{i,c,j,\omeg}(z)$ a $p$-Mahler equation it satisfies. In particular, this allows one to check whether such a series is $0$ or whether there are linear relations, or algebraic relations of a given degree, between them (see \cite{AF17}).
			\item
			Our algorithm can be used to compute solutions in specific rings, such as power, Puiseux or Hahn series. While it does not improve the existing algorithm for Puiseux series, it significantly improves the one for Hahn series. Indeed, the algorithm in \cite{FR24} computes the coefficients of $z^{\gamma}$ of any Hahn series solution for exponents $\gamma$ inside a given finite set. Using Theorem \ref{thm:algo_solutions_equations}, given an integer $N$, we can provide a closed formula for the coefficients of $z^\gamma$ with $\vert \gamma \vert \leq N$, which is an infinite set.
		\end{enumerate}
	\end{rem*}

	\subsection{Linear Mahler systems}\label{sec:systems}
	Recall that we let $\puip$ denote the map $z \mapsto z^p$. 
	Although Theorem \ref{thm:algo_solutions_equations} is about linear equations, the algorithm it refers to relies on our ability to solve linear systems.  A $p$-Mahler system over $\K((z))$ is a system of the form
	\begin{equation}
	\label{eq:Mahler_sys}
	\puip(Y)=AY
	\end{equation}
	with $A \in \mathrm{GL}_m(\K((z)))$. Let $\Puis_\K = \bigcup_{d\geq 1}\K((z^{1/d}))$ and $\Puis_\KK = \bigcup_{d\geq 1}\KK((z^{1/d}))$ denote the fields of Puiseux series with coefficients in $\K$ and $\KK$ respectively. Let $\Hahn_{\KK}=\KK((z^{\QQ}))$ denote the field of Hahn series with value group $\QQ$ and coefficients in $\KK$, that is the set of formal series $\sum_{\gamma \in \QQ} a_\gamma z^\gamma$ with $a_\gamma \in \KK$ such that $\{\gamma\,:\,a_\gamma \neq 0\}$ is a well-ordered subset of $\QQ$ with respect to the restriction of the order $<$. The map $\puip$ extends to $\Puis_\K$, $\Puis_\KK$ and $\Hahn_{\KK}$ in the obvious way. Roques and the first author \cite[Theorem~17]{FR25} proved that any $p$-Mahler system \eqref{eq:Mahler_sys} over $\K((z))$ has a \textit{fundamental matrix of solutions} of the form
	\begin{equation}\label{eq:mat_fund_sol}
	PHe_C
	\end{equation}
	where 
	\begin{enumerate}[label = \rm (\alph*)]
		\item\label{item:Puiseux} the matrix $P \in \GL_m(\Puis_\KK)$ is such that $\Theta:=\puip(P)^{-1}AP$ is block upper triangular with entries in $\KK[z^{-1/*}]$ and constant non-singular diagonal blocks;
		\item\label{item:Hahn} the matrix $H$ 
		\begin{itemize}
			\item belongs to $\GL_m(\Hahn_\KK)$ and satisfies $\puip(H)C=\Theta H$, where $C \in\GL_m(\KK)$ is the constant term of $\Theta$,
			\item is block upper triangular with blocks fitting with the block decomposition of $\Theta$ and diagonal blocks equal to the identity;
			\item has coefficients which are $\KK[z^{\pm 1/d}]$-linear combinations of some $\bxi_{\omega}$, $\omeg \in \Omega$, where $d \in \Z_{>0}$;
		\end{itemize}
		\item\label{item:const} the matrix $e_C$ has coefficients which are $\KK$-linear combinations of the $e_c\logm^j$, $c\in \KK^\times$, $j\in \Z_{\geq 0}$, and satisfies $\puip(e_C)=Ce_c$.
	\end{enumerate}
	We make the computation of such a fundamental matrix of solutions effective.
	\begin{thm}
		\label{thm:algo_solutions_systems}
		Let $\K$ be an effective field. Consider a $p$-Mahler system \eqref{eq:Mahler_sys} over $\K((z))$. Algorithms \ref{algo:solutions_blocks}, \ref{algo:matrixH} and \ref{algo:constante} provide a fundamental matrix of solutions of the form \eqref{eq:mat_fund_sol}. More precisely:
		\begin{itemize}
			\item Algorithm \ref{algo:solutions_blocks} provides a solution to \ref{item:Puiseux} over $\K$, that is a matrix with the required form $\Theta \in \GL_m(\K[z^{-1/*}])$ and a truncation up to any order of a matrix $P \in \GL_m(\Puis_\K)$ of Puiseux series with coefficients in $\K$ such that $\Theta=\puip(P)^{-1}AP$;
			\item Algorithm \ref{algo:matrixH} provides a solution to \ref{item:Hahn}, that is a description of the entries of $H$ as $\KK$-linear combinations of some $\bxi_\omeg$, $\omeg \in \Omega$;
			\item Algorithm \ref{algo:constante} provides a solution to \ref{item:const}, that is a description of the entries of $e_C$ as linear combinations of some $e_c\logm^j$, $c\in \KK^\times$, $j \in \Z_{\geq 0}$.
		\end{itemize}
	\end{thm}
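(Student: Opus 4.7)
The plan is to treat parts (a), (b), (c) by three independent algorithms and verify at the end that the three output matrices combine into a fundamental solution. This composition is immediate: from $\puip(P)^{-1}AP=\Theta$, $\puip(H)C=\Theta H$, and $\puip(e_C)=Ce_C$ one computes
\[
\puip(PHe_C)=\puip(P)\puip(H)\puip(e_C)=\puip(P)\,\Theta H\,e_C=A\cdot PHe_C,
\]
while invertibility of the product follows since each of the three factors is invertible in its own ring.

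For step (a), I would perform a slope filtration of the $p$-Mahler system. First one computes the Newton polygon of $A$; its slopes $\gamma_1<\cdots<\gamma_r$ are rational numbers indexing the factors of the filtration. A gauge transformation by a matrix in $\GL_m(\Puis_\K)$, computable to arbitrary truncation by solving at each step a finite linear system over $\K$, brings $A$ to block upper triangular form whose $i$-th diagonal block is pure of slope $\gamma_i$. Within each pure block, a further polynomial change of basis factors the block as a matrix with entries in $\KK[z^{-1/*}]$ times a constant invertible matrix, again through an iterative cancellation of non-constant leading coefficients modulo lower-order corrections. The output is the required $\Theta$ together with a Puiseux-series $P$ truncatable to any prescribed order.

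For step (b), with $\Theta$ block upper triangular and constant diagonal $C$ in hand, I would write $H=I+N$ with $N$ strictly block upper triangular. The relation $\puip(H)C=\Theta H$, read block by block starting just above the diagonal and moving upward, yields a sequence of inhomogeneous functional equations of the form $\puip(N_{ij})C_j-C_iN_{ij}=R_{ij}$, where $R_{ij}$ depends only on entries of $\Theta$ and on previously determined blocks of $N$. Solving these coefficient by coefficient, the scalar sequence of coefficients of $N_{ij}$ satisfies a linear recurrence whose data is read off from $C_i$, $C_j$ and $\Theta$; expanding in the basis \eqref{eq:basis_rec_lin} directly exhibits each entry of $N_{ij}$ as a finite $\KK[z^{\pm 1/d}]$-combination of some $\bxi_\omeg$.

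For step (c), computing the Jordan normal form of $C$ over $\KK$ gives a decomposition into Jordan blocks $cI+M$ with $c\in\KK^\times$ and $M$ nilpotent; to each such block of size $n$ one attaches an explicit $n\times n$ matrix polynomial in $\logm$ with entries multiplied by $e_c$, satisfying the block-wise functional equation. Assembling these produces $e_C$, and the only effectivity requirement is factoring the characteristic polynomial of $C$ over $\K$. The hardest part of the proof will be step (b): one must verify simultaneously that the formal coefficient recursion produces series lying in $\Hahn_\KK$ (with well-ordered support) and that each such series is a \emph{finite} $\KK[z^{\pm 1/d}]$-combination of $\bxi_\omeg$'s. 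Both points rest on careful tracking of the valuations inherited from the slope filtration of step (a); the control secured there is precisely what guarantees well-orderedness and finiteness when one iterates $\puip^{-1}$ on the right-hand sides $R_{ij}$.
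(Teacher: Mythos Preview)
Your proposal sketches plausible algorithms, but it does not prove the theorem as stated. The theorem asserts correctness of three \emph{specific} algorithms (Algorithms~\ref{algo:solutions_blocks}, \ref{algo:matrixH}, \ref{algo:constante}), and your task is to justify those, not to devise alternatives. More substantively, even read as an alternative construction, your plan for part~\ref{item:Puiseux} has a genuine gap.

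\medskip
\textbf{Part \ref{item:Puiseux}.} You propose a slope filtration followed by block-by-block normalization, and you write that the resulting $\Theta$ has entries in $\KK[z^{-1/*}]$. But the theorem claims $\Theta\in\GL_m(\K[z^{-1/*}])$ and $P\in\GL_m(\Puis_\K)$, i.e.\ everything over the base field $\K$, not its algebraic closure. This is not a technicality: the paper remarks explicitly that this descent to $\K$ does \emph{not} follow from the prior nonconstructive arguments in \cite{FR25}, and is a genuine consequence of the particular construction in Section~\ref{sec:algo_blocs}. Your slope-filtration approach, as with the ``naive approach'' the paper discusses and rejects in Section~\ref{sec:algo_colonnes}, would require diagonalizing constant blocks or computing exponents, forcing passage to $\KK$. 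The paper's Algorithm~\ref{algo:solutions_blocks} avoids this entirely: it first bounds the valuations of any admissible pair (Proposition~\ref{prop:incomplete_basis}), then projects onto a finite window of coefficients via a map $\pi$, reducing the problem to finite-dimensional linear algebra over $\K$. The core is an increasing chain of $\K$-vector spaces $\mathfrak X_0\subset\mathfrak X_1\subset\cdots$ characterized by stability conditions under certain matrices $M_l$ (see \eqref{eq:Xj}), and Lemma~\ref{lem:X_i} shows this chain reaches dimension $m$. No factorization over $\KK$ is needed.

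\medskip
\textbf{Part \ref{item:Hahn}.} Your outline is close in spirit to the paper's: one does solve entry by entry along an order $\prec$, obtaining inhomogeneous scalar equations of the shape $\kappa h(z^p)-\eta h(z)=(\text{known})$. But rather than appealing to valuation control inherited from the slope filtration to argue well-orderedness and finiteness a posteriori, the paper writes down \emph{explicit closed-form} solutions (Lemma~\ref{lem:Hahn}) for the three elementary right-hand sides $\tau z^{-\gamma}$, $\tau z^{-\gamma}\bxi_\omeg$, and $\tau\bxi_\omeg$, each already of the form $\bxi_{\omeg'}$. This directly yields entries of $H$ as $\KK$-linear (not merely $\KK[z^{\pm 1/d}]$-linear) combinations of $\bxi_\omeg$'s, which is the refinement noted after the theorem. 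Your reliance on the basis \eqref{eq:basis_rec_lin} also restricts you to characteristic $0$; the paper works with general multi-linear recurrence sequences to cover arbitrary $\K$.

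\medskip
\textbf{Part \ref{item:const}.} Here you and the paper agree: the paper uses the multiplicative Dunford--Jordan decomposition $C=DU$ and sets $e_U=\sum_{k}\logm^{[k]}(U-\mathrm{I})^k$, $e_D$ via diagonalization, $e_C=e_De_U$. Your Jordan-block description is equivalent.
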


	\begin{rem*}
		\begin{itemize}
			\item It follows from Theorem \ref{thm:algo_solutions_systems} that a solution to \ref{item:Puiseux} exists with $\K$ as a base field instead of $\KK$. Such a fact does not follow from \cite{FR25} directly but is the consequence of the construction in Section~\ref{sec:algo_blocs}. In contrast, it is not always possible to perform \ref{item:const} on $\K$ (see Section~\ref{sec:alg_clo}).
			\item The entries of the matrix $H$ produced by Algorithm \ref{algo:matrixH} are linear combinations of some $\bxi_\omeg$ over the field $\KK$, rather than over the ring $\KK[z^{\pm 1/d}]$ as was the case in \ref{item:Hahn}. This refines the results from \cite{FR25}.
		\end{itemize}
	\end{rem*}
	To any linear equation \eqref{eq:Mahler} one can associate a system by considering its companion matrix. If one obtains a fundamental matrix of solutions of this system thanks to Theorem \ref{thm:algo_solutions_systems}, then one gets a solution as in Theorem \ref{thm:algo_solutions_equations} by looking at the first row of $PHe_C$.

	\subsection{Organization of the paper} In Section~\ref{sec:motivations} we discuss connections between Mahler equations and other areas of research. This section is independent of the remainder of the paper and may be skipped on a first reading. 
The main contribution of this paper is Algorithm~\ref{algo:solutions_blocks} whose proof occupies the next two sections.
In Section~\ref{sec:preliminaires} we gather preliminary results about the matrices $P$ and $\Theta$ introduced in Section~\ref{sec:systems} that will reduce the problem to a finite-dimensional one over $\K$.
In Section~\ref{sec:algo_blocs}, we present and prove the correctness of Algorithm \ref{algo:solutions_blocks}. Once one knows $\Theta$, in Section~\ref{sec:HeC} we describe Algorithms \ref{algo:matrixH} and \ref{algo:constante} which allow us to compute the matrices $H$ and $e_C$, respectively, introduced in Section~\ref{sec:systems}. They provide algorithmic versions of the constructions from \cite{FR25} and \cite{Ro18}. Then, we state Algorithm~\ref{algo:basis_solutions} which is the core of Theorem \ref{thm:algo_solutions_equations}. Eventually, in Section~\ref{sec:RudinShapiro}, we run this algorithm on two examples.

	\section{Applications and connections}\label{sec:motivations}
	
	The study of Mahler equations is motivated by their occurrence in diverse areas of mathematics. The algorithms developed in this paper provide computational tools relevant to each of these domains. We outline some key connections in what follows. This material is not needed for the rest of the paper. Thus, the reader should feel free to proceed directly to Section~\ref{sec:preliminaires}.
	
	\subsection*{Automatic sequences and automatic real numbers} 
	
	Mahler equations naturally occur in the study of finite automata. A deterministic finite automaton can be used as a transducer to produce a sequence $(u_n)_{n\in \N}$ where $u_n$ denotes the output of the automaton when it reads the expansion of $n$ in base $p$ (see \cite{AS03} for details). When the elements $u_n$ belong to some field $\K$, a classical result states that the power series $\sum_n u_n z^n$ satisfies some linear Mahler equation with coefficients in $\K[z]$ (see \cite{Cob68}). When $\K:=\mathbb Q$, combined with Mahler's method, this connection has been used to establish the following results\footnote{The first result was originally proved using a different approach, see \cite{AB07}.}:
	\begin{itemize}
		\item the base-$b$ expansion of an irrational algebraic real number cannot be produced by a finite automaton \cite{PPH15,AF17};
		\item the expansions in two multiplicatively independent bases of an irrational real number cannot both be produced by automata \cite{AF25}.
	\end{itemize}
	It has also been used to reprove and generalize the well-known Cobham's theorem \cite{SS19, AF25}.
	
	\subsection*{Regular sequences}
	Regular sequences are a generalization of sequences produced by a finite automaton. Strictly speaking, the $p$-regular sequences are the sequences that may produce a weighted finite automaton when reading the integers written in base $p$. Among them one finds the sequences whose $n$th term is the sum of digits of $n$ in base $p$, the $p$-adic valuation of $n$ or of $n!$, the complexity of the merge-sort algorithm in a set with $n$ elements, the number of odd entries in the $n$th row of Pascal's triangle, the $n$th Cantor number. As established by Becker \cite{Bec94}, the generating series of a $p$-regular sequence is solution to some $p$-Mahler equation.

	\subsection*{Algebraic power series in $\mathbb F_p[[z]]$} 
	
	When $p$ is prime, the elements of $\mathbb F_p[[z]]$ that are algebraic over $\mathbb F_p[z]$ are exactly the solutions of linear Mahler equations over $\mathbb F_p[z]$. Actually, given an algebraic equation over $\mathbb F_p[z]$, deriving a $p$-Mahler equation whose solutions in $\mathbb F_p[[z]]$ are the power series solutions of the former algebraic equation is straightforward, using the identity $f(z^p)=f(z)^p$ in $\mathbb F_p[[z]]$. For example, when $p=2$, any solution to
	$$
	c_0+c_1y(z)+c_2y(z)^2=0,
	$$
    with $c_0,c_1,c_2\in \mathbb F_p[z]$ and $c_0c_1c_2\neq 0$, satisfies the equation 
    $$c_1c_0y(z)+(c_1^2-c_2c_0)y(z)^2-c_2^2y(z)^4=0$$
    which, for power series in $\mathbb F_2[[z]]$, is equivalent to the Mahler equation
	$$
	c_1c_0y(z)+(c_1^2-c_2c_0)y(z^2)-c_2^2y(z^{2^2})=0.
	$$
	One should note, however, that this does not imply that our main algorithm for Mahler equations provides all solutions of the initial algebraic equation. Indeed, the identity $f(z^p)=f(z)^p$ does not extend to $\Hahn_{\overline{\mathbb F_p}}[(e_c)_{c},\logm]$.

	\subsection*{Periods of Drinfeld modules} 
	
	The theory of Drinfeld modules in positive characteristic provides analogs of numerous famous periods. As shown by Denis \cite{Den00}, some of them arise as values of solutions of Mahler equations. For example, a $p$-analog of $\pi$ is given by $\prod_{n\geq 0}(1-\theta^{1-p^n})$ in the completion $\K$ of the algebraic closure of $\mathbb F_p((\theta^{-1}))$. This is the value at $\theta^{-1}$ of the function $f(z):=\prod_{n\geq 0} (1-\theta z^{p^n})\in \K[[z]]$ which satisfies the $p$-Mahler equation
	$$
	f(z)-(1-\theta z)f(z^p)=0\,.
	$$
	Another example is the analog $\zeta_C$ of the classical Riemann $\zeta$ function. Carlitz \cite{Car35} proved that $\zeta_C(s)=f_s(\theta)$ for $s \in \{1,\ldots,p-1\}$, where $f_s(z) \in \K[[z]]$ satisfies the inhomogeneous $p$-Mahler equation
	$$
	f_s(z^p) - (-1)^s (z^p - \theta)^s f_s(z) = (-1)^s (z^p - \theta)^s\,.
	$$
	Using this equation, Dennis \cite{Den06} was able to prove the algebraic independence of $\zeta_C(1),\ldots,\zeta_{C}(p-1)$ over $\mathbb F_p(\theta)$ (see also \cite{Fer18}).

	\subsection*{Purity results} 
	When considering the minimal differential equation satisfied by some $E$-function, Andr\'e \cite{An00} proved that the functions $f(z)$ appearing in \eqref{eq:diff} are $E$-functions as well. Surprisingly, such a ``purity theorem'', which does not concern the values of these functions, implies the celebrated Siegel--Shidlovskii theorem on algebraic relations between values of $E$-functions at algebraic points and even some remarkable refinements due to Beukers \cite{Be06}. A purity theorem analogous to Andr\'e's has recently been established in the framework of Mahler equations \cite{FR25} and similarly shares a deep connection with results concerning the values of solutions of these equations \cite{ABS23}. One instance of this result is the following: if \eqref{eq:Mahler} is the minimal Mahler equation satisfied by some $p$-regular power series, then it has a basis of solutions of the form \eqref{eq:basis_solutions} for which each of the series $f_{i,c,j,\omeg}$ is $p$-regular too.
	
	\subsection*{Galois theory of Mahler equations} 
	
	Recent results have established that any algebraic relation between values of solutions of Mahler equations at some nonzero algebraic point has a functional origin (see \cite{AF24}). 
    The study of functional relations is a delicate task and constitutes the subject of difference Galois theory. The Galois group of a Mahler system \eqref{eq:Mahler_sys} is a linear algebraic group which encodes the relations between the entries of any fundamental matrix of solutions. From this point of view, being able to compute a fundamental matrix of solutions of a Mahler system as in Theorem \ref{thm:algo_solutions_systems} should be of some help to understand the algebraic relations between solutions of Mahler equations.

	\section{Properties of the Puiseux part of a fundamental matrix of solutions}\label{sec:preliminaires}
	We fix an integer $p\geq 2$, a field $\K$ and let $\KK$ denote an algebraic closure of $\K$. 
	The general form \eqref{eq:mat_fund_sol} of a fundamental matrix of solutions leads us to look for pairs $(P,\Theta)$ satisfying Condition \ref{item:Puiseux}. This section is devoted to the study of such pairs.
	\begin{defi}\label{def:admissible}
		We say that a pair of $m\times m$ matrices $(P,\Theta)$ is admissible with respect to \eqref{eq:Mahler_sys} if the following holds: 
		\begin{itemize}
			\item $\Theta$ is block upper triangular with entries in $\K[z^{-1/*}]$ and has non-singular constant diagonal blocks;
			\item $P \in \GL_{m}(\Puis_{\K} )$ is such that 
			\begin{equation}\label{eq:F1Theta=AF1}
			\puip(P)\Theta=AP\, .
			\end{equation}
		\end{itemize}
	\end{defi}

	\subsection{Equations, systems and modules}\label{sec:modules}
	We briefly recall the vocabulary of equations, systems and modules. 
	
	Let $\Hahn_\K$ denote the field of Hahn series with coefficients in $\K$ and value group $\QQ$. An element of $\Hahn_\K$ is a formal series $f(z)=\sum_{\gamma \in \QQ}f_\gamma z^\gamma$ with $f_\gamma \in \K$ such that the set $\{\gamma : f_\gamma \neq 0\}$ is a well-ordered subset of $\QQ$ for the total order $<$. This guarantees that $\Hahn_\K$ is a field extension of $\K((z))$ (see \cite{Ro18} for details). 
	
	Given a subfield $\mathbb F$ of $\Hahn_\K$, we say that two $p$-Mahler systems $\puip(Y)=AY$ and $\puip(Y)=BY$ are $\mathbb F$-equivalent if there exists a matrix $F \in \GL_m(\mathbb F)$ such that $A=\puip(F)BF^{-1}$. 
    When $\K$ is algebraically closed, by the \textit{cyclic vector lemma}\footnote{The proof in \cite{FP22} is written over $\Q(z)$ but it works over $\mathbb F$ since its field of coefficients $\K$ is algebraically closed, hence, infinite.} \cite{FP22}, any $p$-Mahler system 
	\begin{equation}\label{eq:Mahler_system_dF}
	\puip(Y)=AY,\quad A(z) \in \GL_m(\mathbb F)
	\end{equation}
	is $\mathbb F$-equivalent to a $p$-Mahler system whose matrix is of the form
	\begin{equation}\label{eq:companion}
	B=\begin{pmatrix}
	0 & 1 & 0 & \cdots & 0
	\\ & \ddots& \ddots & &
	\\ & & \ddots & \ddots & 
	\\ & & & 0 & 1
	\\ a_0& \cdots & \cdots & \cdots & a_{m-1}
	\end{pmatrix}
	\end{equation} 
	where $a_0,\ldots,a_{m-1} \in \mathbb F$, $a_0a_{m-1}\neq 0$. A vector $Y$ is solution to the $p$-Mahler system with matrix $B$ if and only if $Y=\prescript{\rm t}{}(y,\puip(y),\ldots,\puip^{m-1}(y))$ for some $y$ solution to the following $p$-Mahler equation
	\begin{equation}\label{eq:Mahler_eq}
	a_0y+a_1\puip(y)+\cdots+a_{m-1}\puip^{m-1}(y)-\puip^m(y)=0\,.
	\end{equation}
	
	Consider the ring $\mathscr{D}_{\mathbb F}:=\mathbb F\langle\Phi\rangle$ of non-commutative polynomials in the indeterminate $\Phi$ with the property that $\Phi f=\puip(f)\Phi$ for any $f \in \mathbb F$. To any companion matrix $B$ as in \eqref{eq:companion} we associate the operator $L_{B}=a_0+a_1\Phi+\cdots+a_{m-1}\Phi^{m-1}-\Phi^m$. Then \eqref{eq:Mahler_eq} may be rewritten as $L_B(y)=0$.
	
	To any system \eqref{eq:Mahler_system_dF}, we can associate a $\mathscr{D}_{\mathbb F}$-module $M_A$ of finite rank as follows: the underlying $\mathbb F$-vector space is $\mathbb F^m$ and 
	$\Phi$ acts on $\mathbb F^m$ by
	$$
	\forall \v \in \mathbb F^m, \quad \Phi(\v)=A^{-1}\puip(\v) \,.
	$$
	Conversely, to any $\mathscr{D}_{\mathbb F}$-module $M$ of finite rank we can associate a Mahler system by choosing a basis. It is not difficult to prove that if $\puip(Y)=AY$ and $\puip(Y)=BY$ are $\mathbb F$-equivalent then the  $\mathscr{D}_{\mathbb F}$-modules $M_A$ and $M_B$ are isomorphic. Furthermore, if $B$ is a companion matrix, $M_B$ is isomorphic to the module $\mathscr{D}_{\mathbb F}/\mathscr{D}_{\mathbb F}L_B$.
	
	To any operator $L=a_0+a_1\Phi +\cdots +a_m\Phi^m \in \mathscr{D}_{\mathbb F}$ we associate a \textit{Newton polygon} $\mathcal N(L)$, which is the convex hull of the set of points $\left\{(p^i,j)\,:\, 0 \leq i \leq m,\,j\geq \val a_i\right\}$. 
	We let $\mu_1,\ldots,\mu_r \in \mathbb Q$ denote the finite slopes of $\mathcal N(L)$. Let $d(L)$ be the least integer $d$ such that $\mu_1,\ldots,\mu_r \in d^{-1}\Z[p^{-1}]$. This is the least integer $d$ relatively prime with $p$ and such that the denominators of $\mu_1,\ldots,\mu_r$ divide $dp^k$ for some $k$.

	\subsection{Ramification order of an admissible pair} Recall that we assume $\K$ to be algebraically closed. We now take $\mathbb F:=\K((z))$ and consider a Mahler system
	\begin{equation}\label{eq:Mahler_system_d}
	\puip(Y)=AY,\quad A(z) \in \GL_m(\K((z)))\, .
	\end{equation}
	Then, the module $M_A$ is isomorphic to some $\mathscr{D}_{\K((z))}$-module  $\mathscr{D}_{\K((z))}/\mathscr{D}_{\K((z))}L$. We set $d(A):=d(L)$. It can be proved that this does not depend on the choices of $L$ such that $M_A \simeq \mathscr{D}_{\K((z))}/\mathscr{D}_{\K((z))}L$, a fact we do not need here. Note that $d(A(z^{d(A)}))=1$ for any matrix $A$, so we can always reduce to the case where $d(A)=1$. Note also that $d(A) \mid p^m-1$ so that $d(A(z^{p^m-1}))=1$.

	\begin{prop}
		\label{prop:ramification} Suppose that $\K$ is algebraically closed. Then, the system \eqref{eq:Mahler_system_d} admits an admissible pair $(P,\Theta)$ such that $P$ and $\Theta$ have entries in $\K((z^{1/d(A)}))$.
	\end{prop}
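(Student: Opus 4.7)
My plan is to work via the equivalence between Mahler systems and Mahler equations, and then exploit the Newton polygon of the latter. Since $\K$ is algebraically closed, the cyclic vector lemma (Section~\ref{sec:modules}) applies: the system $\puip(Y)=AY$ is $\K((z))$-equivalent to one attached to an operator $L = a_0 + a_1\Phi + \cdots + a_{m-1}\Phi^{m-1} - \Phi^m$. Finding an admissible pair for $A$ reduces, up to a Puiseux gauge built from the cyclic-vector change of basis, to finding one for the companion matrix of $L$. Since $\mathscr{D}$-module isomorphism preserves $d(L)$, the invariant $d(A)$ is unchanged, and the slopes $\mu_1 < \cdots < \mu_r$ of $\mathcal{N}(L)$ lie in $d(A)^{-1}\Z[p^{-1}]$.

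The heart of the proof would be a Newton polygon decomposition for $p$-Mahler operators, analogous to Levelt--Turrittin in the differential setting: after passage to the ramified field $\K((z^{1/d(A)}))$, I would filter the $\mathscr{D}$-module $\mathscr{D}/\mathscr{D}L$ by submodules $0 \subset M_1 \subset \cdots \subset M_r$ whose successive quotients are \emph{pure of slope $\mu_i$}. Choosing bases adapted to this filtration gives a block upper triangular matrix whose diagonal blocks each correspond to a single slope. One must be careful about the ramification: a priori, a slope $\mu = a/(d(A)p^k)$ seems to require additional ramification by $p^k$, but the Frobenius-like action of $\puip$ on exponents allows these $p^k$-denominators to be handled inside $\K((z^{1/d(A)}))$ via iterated Mahler twists instead of enlarging the Puiseux ramification; the genuinely $p^k$-denominator parts will be absorbed later into the Hahn matrix $H$, not into $P$.

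For each pure-slope diagonal block, I would complete the reduction in two sub-steps. First, since $\K$ is algebraically closed, the constant matrix associated with the block by its leading behaviour splits over $\K$, allowing a further refinement indexed by its eigenvalues in $\K^\times$. Second, a twist of each refined block by a Puiseux monomial $z^\beta$ with $\beta \in d(A)^{-1}\Z$, chosen so that $\puip(z^\beta)/z^\beta$ cancels the slope contribution at the leading order, makes the diagonal block constant while pushing the slope-carrying terms into off-diagonal entries lying in $\K[z^{-1/d(A)}]$. Composing the gauge transformations from the three stages yields an admissible pair $(P,\Theta)$ with $P \in \GL_m(\K((z^{1/d(A)})))$ and the diagonal blocks of $\Theta$ in $\GL_m(\K)$.

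The main difficulty, as I see it, lies in the second stage: rigorously establishing the Newton polygon decomposition while tracking the ramification denominator as exactly $d(A)$, rather than some larger $d(A) p^k$. This requires both an induction on the number of slopes (or on the rank) and the observation that the $p$-part of the slope denominators can be controlled by the iterative structure of $\puip$, so that it does not inflate the Puiseux ramification of $P$.
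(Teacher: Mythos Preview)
Your overall strategy---cyclic vector, Newton polygon, triangularize by slope, twist diagonal blocks to constants, then clean up the off-diagonal entries---is in the same spirit as the paper's two-part argument. But there is a genuine gap at exactly the point you yourself flag as the main difficulty, and the hand-waving about ``iterated Mahler twists'' and ``absorbing the $p^k$-denominator parts into $H$'' does not close it.

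The missing ingredient is an arithmetic observation about the \emph{first} slope of $\mathcal N(L)$. The leftmost edge of the Newton polygon runs from abscissa $p^0=1$ to some abscissa $p^k$, so its slope has the form $\mu_1=a/(p^k-1)$ with $a\in\Z$. If $d(L)=1$ then $\mu_1\in\Z[p^{-1}]$; since $\gcd(p,p^k-1)=1$, this forces $\mu_1\in\Z$. Hence the twist $z^{\mu_1}$ lives in $\K((z))$, no extra ramification is needed, and one can peel off a degree-one right factor $N=(\Phi-c)f^{-1}$ with $f\in\K((z))$. The paper then invokes the fact that each slope of the quotient $M$ is $p$ times a slope of $L$, so $d(M)=1$ persists and the same argument applies to the new first slope of $M$. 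Iterating gives a complete factorization $L=L_1\cdots L_m$ into order-one operators over $\K((z))$, which immediately yields an upper triangular form with constant diagonal. This is strictly finer than a slope filtration, and the ramification control is automatic once you know the first-slope lemma.

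Your proposal never isolates this point. A generic slope has denominator $p^i(p^{j-i}-1)$, so filtering by all slopes at once does confront genuine $p$-powers; the paper avoids this by always working with the \emph{first} slope and re-applying the argument after each factorization. Also, your remark that the $p^k$-denominators get ``absorbed into $H$'' conflates two different phenomena: the $p$-power denominators in $H$ arise from solving equations of the type $\kappa h(z^p)-\eta h(z)=z^{-\gamma}$, not from Newton-polygon slopes, and cannot be invoked to rescue the construction of $P$ and~$\Theta$.
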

	As mentioned in the introduction, we will obtain as a by-product of Algorithm~\ref{algo:solutions_blocks} that this proposition still holds when $\K$ is not algebraically closed.
	
	\begin{proof}
		Up to replacing $z$ by $z^{d(A)}$, we may assume that $d(A)=1$. Then, Proposition~\ref{prop:ramification} can be reformulated as follows: \textit{the system \eqref{eq:Mahler_system_d} is $\K((z))$-equivalent to a block upper triangular system with matrix in $\GL_m(\K[z^{-1}])$ and constant diagonal blocks}.
		Since we closely follow the proof of Step~1 in \cite[Theorem~17]{FR25}, we shall be brief and present only the necessary adaptations. The existence of an admissible pair follows from a construction in two parts. The first one is a factorization of Mahler operators which implies that any Mahler system is Puiseux-equivalent to an upper triangular system with constant diagonal coefficients and Puiseux above-diagonal coefficients. The second part is a construction which implies that such a system is Puiseux-equivalent to a block upper triangular system with constant diagonal blocks and whose above-diagonal blocks have entries in $\K[z^{-1}]$. We only have to explain that these two steps may be carried out over $\K((z))$ instead of the field of Puiseux series and that $\K$ may be any algebraically closed field.

		\vskip 5 pt 
		\noindent \textit{Part 1: Factorization of Mahler operators.}
		We prove the following claim: \textit{if $L \in \mathscr{D}_{\K((z))}$ is such that $d(L)=1$, then one has $L=L_1\cdots L_m$ with $L_i \in \mathscr{D}_{\K((z))}$ of degree $1$ in $\Phi$}. 
		
		The proof is an adaptation of the one of \cite[Proposition 15]{Ro24}. By induction, it is enough to prove that $L$ admits a factorization $L=MN$, with $M,N \in \mathscr{D}_{\K((z))}$, $N$ of degree $1$ in $\Phi$ and $d(M)=1$ (then, apply the induction hypothesis to $M$). The first slope $\mu_1$ of $\mathcal N(L)$ is of the form $\mu_1=a/(p^k-1)$ for some $a \in \Z$ and $k \geq 1$. Since $d(L)=1$, $\mu_1$ has to be an integer. Without loss of generality, we may suppose that $\mu_1=0$. Indeed, with the notation of \cite{Ro24}, this amounts in replacing $L$ with $L^{[\theta_{\mu_1}]}$, which does not modify $d(L)$ since $\mu_1 \in \Z$ (\cite[Lemma 17]{Ro24}). By \cite[Lemma 20]{Ro24}, there exists $f \in \Hahn_\K$ such that $f(0)=1$ and $L(fe_c)=0$ for some $c\in\K^\times$. The proof given there readily extends to any algebraically closed field $\K$. Since $L$ has coefficients in $\K((z))$, $f$ actually belongs to $\K((z))$ (see \cite[Lemma 28]{FP25}). Let $M$ denote the result of the right Euclidian division of $L$ by $N:=(\Phi-c)f^{-1}$ in $\mathscr{D}_{\K((z))}$. Then $M \in \mathscr{D}_{\K((z))}$ and $L=MN$. By \cite[Lemma 21]{Ro24}, each slope of $M$ is $p$ times a slope of $L$. \textit{A fortiori}, $d(M)=1$. This proves the claim.
		
		\vskip 5 pt 
		\noindent \textit{Part 2: Reduction to $\K[z^{-1}]$.}
		Let $L \in \mathscr{D}_{\K((z))}$ be such that $M_A\simeq \mathscr{D}_{\K((z))}/\mathscr{D}_{\K((z))}L$. From the first step, there exists $L_1,\ldots,L_m \in \mathscr{D}_{\K((z))}$ of order $1$ such that
		$$
		L=L_1\cdots L_m\,.
		$$
		Arguing as in \cite[Section~4.1.1]{FR25}, we conclude that the Mahler system \eqref{eq:Mahler_system_d} is $\K((z))$-equivalent to an upper triangular system with constant diagonal coefficients and above-diagonal coefficients in $\K((z))$. Then we follow the construction in \cite[Section 4.1.2]{FR25} without any modification. The key argument is that \cite[Lemma~20]{FR25} remains valid when one replaces the field of Puiseux series with $\K((z))$ and when $\K$ is any field. The remainder of the construction proceeds identically.
	\end{proof}

	\subsection{Valuations of an admissible pair}
	The aim of this section is to bound from below $\val P$ and $\val \Theta$ for some admissible pair $(P,\Theta)$. We consider the following set
	$$
	\mathcal S_p:=\{s \in \Z_{<0}\,:\, p \nmid s\} \cup\{0\}
	$$
	of non-positive integers either equal to $0$ or not divisible  by $p$.
	
	\begin{prop}
		\label{prop:incomplete_basis}
		Suppose that $\K$ is algebraically closed and consider a Mahler system \eqref{eq:Mahler_system_d} with $d(A)=1$. Then, there exists an admissible pair $(P,\Theta)$ for which
		\begin{enumerate}[label = $\rm (\roman*)$]
			\item \label{item:support_Theta} $P \in \GL_m(\K((z)))$ and $\Theta$ has support in $\mathcal S_p$;
			\item \label{item:valuation} one has
			\begin{equation*}\label{eq:valuation}
			\val P \geq \frac{\val A}{p-1},\quad \text{and}\quad \val \Theta\geq \frac{pm \val A - p\val(\det A)}{p-1}\,.
			\end{equation*}
		\end{enumerate}
		Furthermore, each admissible pair satisfying \ref{item:support_Theta} also satisfies \ref{item:valuation}.
	\end{prop}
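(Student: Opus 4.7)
The plan is to treat the two parts of the proposition separately. Part (i), the existence of an admissible pair with support in $\mathcal{S}_p$, is obtained by modifying the admissible pair provided by Proposition~\ref{prop:ramification}. Part (ii), together with the ``furthermore'' clause, follows from a low-order analysis of the equation $\puip(P)\Theta = AP$, exploiting the support condition on $\Theta$ and the invertibility of its coefficient of $z^0$.

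For Part (i), I would start from an admissible pair $(P^{\flat},\Theta^{\flat})$ given by Proposition~\ref{prop:ramification}, with $\Theta^{\flat} \in \GL_m(\K[z^{-1}])$ block upper triangular with constant non-singular diagonal blocks, and apply a gauge transformation $P = P^{\flat} M$, $\Theta = \puip(M)^{-1}\Theta^{\flat} M$ with $M \in \GL_m(\K((z)))$ of the form $M = I + N$, $N$ strictly block upper triangular, so as to eliminate from $\operatorname{supp}(\Theta)$ the ``forbidden'' exponents, namely the negative multiples of $p$. At each iteration step, the relevant coefficient of $N$ is determined by solving a linear equation whose solvability is ensured by the non-singularity of the constant diagonal blocks of $\Theta^{\flat}$. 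This proceeds along the same lines as Part~2 of the proof of Proposition~\ref{prop:ramification}, following \cite[Section~4.1.2]{FR25}, with $\Z_{\leq 0}$-supports replaced by $\mathcal{S}_p$-supports.

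For Part (ii), fix an admissible pair $(P,\Theta)$ with $\operatorname{supp}(\Theta) \subseteq \mathcal{S}_p$, and let $C$ denote the coefficient of $z^0$ in $\Theta$ (itself block upper triangular with non-singular constant diagonal blocks, hence invertible). Taking determinants in $\puip(P)\Theta = AP$ and using $\det\Theta = \det C \in \K^{\times}$ yields $\val(\det P) = \val(\det A)/(p-1)$. For the bound on $\val P$, I look at the coefficient of $z^{p\val P}$ on both sides. On the left, any contribution $P_k \Theta_s$ requires $pk + s = p\val P$ with $k \geq \val P$, so $s = p(\val P - k)$ is a non-positive multiple of $p$; since $\mathcal{S}_p \cap p\Z = \{0\}$, the only possibility is $s = 0$ and $k = \val P$, yielding $P_{\val P}\, C$. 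On the right, under the assumption $\val P < \val A/(p-1)$, any contribution $A_a P_k$ satisfies $a = p\val P - k \leq (p-1)\val P < \val A$, hence $A_a = 0$ and the coefficient vanishes. Equating yields $P_{\val P}\, C = 0$, and since $C$ is invertible, $P_{\val P} = 0$, contradicting the definition of $\val P$. Therefore $\val P \geq \val A/(p-1)$. The bound on $\val \Theta$ then follows from $\Theta = \puip(P^{-1})AP$ combined with $\val(P^{-1}) \geq (m-1)\val P - \val(\det P)$ (coming from $P^{-1} = \mathrm{adj}(P)/\det P$) and the previous two bounds; a direct computation gives $\val\Theta \geq (pm - p + 1)\val P + \val A - p\val(\det A)/(p-1) \geq (pm\val A - p\val(\det A))/(p-1)$.

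The main difficulty is Part (i): the iterative elimination of bad exponents may, at intermediate stages, introduce new bad terms at lower orders, so one has to verify that the construction can be carried out within $\K((z))$ and not merely in a larger formal-series extension. This is handled by the same arguments as in \cite[Section~4.1.2]{FR25}, the key being that the linear equations involved remain uniquely solvable at each stage thanks to the constant non-singular diagonal blocks of $\Theta^{\flat}$.
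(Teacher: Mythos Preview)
Your proposal is essentially correct, and for Part~(ii) your argument for the bound $\val P \geq \val A/(p-1)$ is in fact cleaner than the paper's. You compare the coefficient of $z^{p\val P}$ on both sides of $\puip(P)\Theta = AP$ directly, using that $\mathcal S_p \cap p\Z = \{0\}$ together with the invertibility of the constant term $C$ of $\Theta$, and conclude $P_{\val P}C = 0$, a contradiction. The paper instead first conjugates $\Theta$ into scalar upper triangular form (using that $\K$ is algebraically closed) and then argues column by column, by induction on the index of the first column violating the bound. Both routes work; yours has the advantage of not requiring the passage to triangular form. The bound on $\val\Theta$ via $\Theta = \puip(P^{-1})AP$, the adjugate formula, and $\val(\det P) = \val(\det A)/(p-1)$ is obtained identically in both approaches.

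For Part~(i), your sketch is on the right track but the pointer to \cite[Section~4.1.2]{FR25} is not quite apt: that construction reduces a system with $\K((z))$-entries to one with $\K[z^{-1}]$-entries, which is a different task than eliminating exponents in $p\Z_{<0}$ from a matrix already in $\GL_m(\K[z^{-1}])$. The paper proceeds differently. It first passes to the scalar upper triangular form and then uses a minimality-contradiction argument: assuming no admissible pair has $\mathcal S_p$-support, it selects one that is ``as close as possible'' with respect to a lexicographic ordering on the position $(i_0,j_0)$ and largest bad exponent $\nu_0$ of the first offending term; then the single elementary gauge $M = I + \eta\theta_{j_0}^{-1} z^{-\nu_0} E_{i_0,j_0}$ produces a strictly closer pair, since it replaces $\eta z^{-\nu_0 p}$ by $\eta\theta_{i_0}\theta_{j_0}^{-1} z^{-\nu_0}$ while leaving all entries $\prec (i_0,j_0)$ unchanged. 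Your iterative elimination would also work --- the key observations being that removing $z^{-\nu p}$ introduces only a term at $z^{-\nu}$, whose absolute exponent is strictly smaller, so the process terminates within each entry, and that modifying entry $(i,j)$ only perturbs entries $\succ (i,j)$ --- but the paper's extremal argument sidesteps the bookkeeping of these propagations.
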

	The proof is divided in two steps, corresponding respectively to \ref{item:support_Theta} and \ref{item:valuation}.
	\subsubsection{Proof of \ref{item:support_Theta} of Proposition \ref{prop:incomplete_basis}}
	
	By Proposition \ref{prop:ramification}, the system has an admissible pair $(P,\Theta)$ with $P \in \GL_m(\K((z)))$ and $\Theta \in \GL_m(\K[z^{-1}])$. Since $\K$ is algebraically closed, there exists such pair with $\Theta$ upper triangular. For such pairs we let $\theta_1,\ldots,\theta_m$ denote the diagonal entries of $\Theta$ and $\theta_{i,j}(z)$, $i<j$, denote its above diagonal entries.
	
	Let $\mathcal I:=\{(i,j) \in \{1,\ldots,m\}^2\,:\,i<j\}$. We define an order on $\mathcal I$ as follows: $(i,j)\prec(k,l)$ if either $j<l$ or ($j=l$ and $i>k$). To scan the above-diagonal entries of $\Theta(z)$ with respect to this order one proceeds from left-to-right and then from bottom-to-top. 
	
	Suppose by contradiction that there exists no admissible pairs $(P,\Theta)$ for which $P \in \GL_m(\K((z)))$ and $\Theta$ is uper triangular with support is in $\mathcal S_p$. Thus, for any admissible pair $(P,\Theta)$  with $P \in \GL_m(\K((z)))$ and $\Theta$ upper triangular, there exists $(i,j)\in \mathcal I$  such that $\theta_{i,j}(z)$ has a non-zero multiple of $p$ in its support: we let $(i_\Theta,j_\Theta)$ be the least of them with respect to $\prec$. 
	Consider the pairs $(P,\Theta)$ such that $(i_\Theta,j_\Theta)$ takes its maximal value with respect to $\prec$ and let $(i_0,j_0)$ denote this value. Recall that, for any $(P,\Theta)$ such that $(i_\Theta,j_\Theta)=(i_0,j_0)$, the support of $\Theta_{i,j}(z)$ for each $(i,j)\prec (i_0,j_0)$ is included in $\mathcal S_p$. Furthermore, by assumption, for any pair $(P,\Theta)$ such that $(i_\Theta,j_\Theta)=(i_0,j_0)$, there exists $\nu \in \Z_{>0}$ such that $z^{-\nu p}$ belongs to the support of $\theta_{i_0,j_0}(z)\in \K[z^{-1}]$. When the pair $(P,\Theta)$ is fixed, we let $\nu_\Theta$ denote the greatest such $\nu$. Then, we let $\nu_0$ denote the minimum of the integers $\nu_\Theta$, among all the pairs $(P,\Theta)$ for which $(i_\Theta,j_\Theta)=(i_0,j_0)$. From now on we fix a pair $(P,\Theta)$ with $(i_\Theta,j_\Theta)=(i_0,j_0)$ and $\nu_\Theta=\nu_0$. We shall obtain a contradiction.
	
	Let $\eta$ denote the coefficient of $z^{-\nu_0p}$ in $\theta_{i_0,j_0}(z)$. Let $M$ denote the matrix which is the identity but for its $(i_0,j_0)$th entry which is equal to $\eta\theta_{j_0}^{-1}z^{-\nu_0}$. Set 
	$$
	\widetilde{\Theta}:=\puip(M)^{-1}\Theta M.
	$$
	This matrix is upper triangular, with diagonal coefficients $\theta_1,\ldots,\theta_m$ and above diagonal coefficients in $\K[z^{-1}]$. Furthermore, when $(i,j)\prec(i_0,j_0)$ the $(i,j)$th coefficient of $\widetilde{\Theta}$ is the same as the one of $\Theta$ and the $(i_0,j_0)$th coefficient of $\widetilde{\Theta}$ is equal to
	\begin{equation}\label{eq:tilde_Theta}
	\widetilde{\theta}_{i_0,j_0}(z):=\theta_{i_0,j_0}(z) - \eta z^{-\nu_0p} +\eta\theta_{i_0}\theta_{j_0}^{-1}z^{-\nu_0}\,.
	\end{equation}
	Last, setting $\widetilde{P}=PM$ we have $\puip(\widetilde{P})\widetilde{\Theta}=A\widetilde{P}$ so that the pair $(\widetilde{P},\widetilde{\Theta})$ is admissible. By maximality of $(i_0,j_0)$, the support of $\widetilde{\theta}_{i_0,j_0}(z)$ has an element of the form $-\mu p$, for some $\mu \in \Z_{\geq 1}$. By minimality of $\nu_0$, $\mu \geq \nu_0$. This contradicts \eqref{eq:tilde_Theta}.
	\qed
	
	\subsubsection{Proof of \ref{item:valuation} of Proposition \ref{prop:incomplete_basis}}  
	Let $(P,\Theta)$ be an admissible pair satisfying \ref{item:support_Theta}. Up to conjugation we may and will assume that $\Theta$ is upper triangular. We start by establishing the lower bound on $\val P$. Let $P_1,\ldots,P_m$ enumerate the columns of $P$ from left to right. We proceed by contradiction. Suppose that the lower bound for $\val P$ does not hold and let $i$ be the least integer such that $\val P_i<\val A/(p-1)$. We let 
	$$\theta_{1}(z),\ldots,\theta_{i-1}(z),\lambda_i,0,\ldots,0$$
	denote the entries of the $i$th column of $\Theta(z)$. Then, we infer from \eqref{eq:F1Theta=AF1} that
	$$
	\lambda_i\puip(P_i) = AP_i - \theta_1(z)\puip(P_1) - \cdots - \theta_{i-1}(z)\puip(P_{i-1})
	$$
	Since $\val P_i<\val A/(p-1)$, we have $\val(AP_i)\geq \val(A) + \val(P_i)> p\val(P_i) =\val(\lambda_i\puip(P_i))$. Thus, the valuation of $\puip(P_i)$ is equal to the valuation of
	$$
	\theta_1(z)\puip(P_1) + \cdots + \theta_{i-1}(z)\puip(P_{i-1})\,.
	$$
	Such a valuation is of the form $\nu + p\mu$, where $\nu$ is in the support of some $\theta_{j}(z)$, $1\leq j<i$, and $\mu \geq \min_{j<i}\val(P_j)$. Since $\val(\puip(P_i)) = p\val(P_i)$, $p$ must divide $\nu$. Since $\nu \in \mathcal S_p$, we have $\nu =0$. Thus, $\val P_i =\mu \geq \min_{j<i}\val(P_j) \geq \val A/(p-1)$, a contradiction.
	
	To establish the bound for $\val \Theta$ we first bound $\val P^{-1}$ from below. Since $\Theta$ is upper triangular with invertible constant diagonal entries, $\det \Theta \in \K^\times$. Hence, $\val \det \Theta=0$ and we infer from \eqref{eq:F1Theta=AF1} that
	\begin{equation}\label{eq:val_detP}
	\val(\det P) = \frac{\val(\det A)}{p-1}\,.
	\end{equation}
	From the formula $P^{-1}=(\det P)^{-1}\prescript{\rm t}{}{\operatorname{Com}P}$, we deduce that
	\begin{multline*}
	\val P^{-1} = \val(\operatorname{Com} P) - \val(\det P) 
	\\ \geq (m-1) \val P - \frac{\val (\det A)}{p-1}\geq \frac{(m-1)\val A-\val( \det A)}{p-1}.
	\end{multline*}
	Last, since $\Theta=\puip(P^{-1})AP$, we have
	$$
	\val \Theta \geq p\val P^{-1} + \val A + \val P \geq \frac{pm\val A-p\val(\det A)}{p-1}
	$$
	\qed
	
	\subsection{Laurent series expansion of $P$}
	We fix a Mahler system \eqref{eq:Mahler_system_d} for which $d(A)=1$. We do not necessarily assume that $\K$ is algebraically closed. Note that if $(P,\Theta)$ is admissible it remains so when considering the equation over $\KK((z))$. In particular, the conclusion of Proposition \ref{prop:incomplete_basis} remains true. In the remainder of the paper we shall take the following notation:
	\begin{align}\label{eq:nota_val}
	\nonumber&\nu_\Theta  :=\min\left\{ v\geq   \frac{pm \val A - p\val(\det A)}{p-1}\,:\, v\in \mathcal S_p \right\},\qquad  \nu_P := \left\lceil \frac{\val A}{p-1}\right\rceil ,
	\\  &\nu:= \min\{\nu_P,\,p\nu_P + \vb\}+\nu_\Theta,
	\\ \nonumber &\mu := \max\left\{\left\lceil -\frac{\vb + \nu_\Theta}{p-1}\right\rceil,\,\frac{\val \det(A)}{p-1}-(m-1)\nu_P\right\}. 
	\end{align}
	The integers $\nu_P$ and $\nu_\Theta$ should be considered as lower bounds for the valuations of the components of an admissible pair $(P,\Theta)$. Since $\val(\det A)\geq m \val A$ we have $\nu_{\Theta}\leq 0$. Last, it follows from \eqref{eq:val_detP} that $\mu$ is an integer.

	Given an admissible pair $(P,\Theta)$ --if such a pair exists-- one can group the columns of $P$ according to the block decomposition of $\Theta$. The aim of this section is to establish recurrence formulas for the Laurent series expansion of these groups of columns.

	\subsubsection{A recurrence formula for the blocks of columns}
	\label{subsec:recurrBlocks}
	Let $P \in \GL_m(\K((z)))$ and let $\Theta \in \GL_m(\K[z^{-1}])$ be block upper triangular with blocks of sizes $b_1,\ldots,b_s$, respectively, support in $\mathcal S_p$, $\val(P)\geq \nu_P$ and $\val \Theta \geq \nu_\Theta$. We do not assume at this stage that $(P,\Theta)$ is an admissible pair. We partition the columns of $P$ according to the block decomposition of $\Theta$ and denote by $Q_1,\ldots,Q_s$ the corresponding matrices. Precisely, $Q_j$ is the $m\times b_j$ matrix consisting of the columns $b_1+\cdots+b_{j-1}+1$ to $b_1+\cdots+b_{j}$ of $P$. For $i< j$ we let $\Theta_{i,j}(z)$ denote the $(i,j)$-th block of $\Theta$, which is a $b_i\times b_j$ matrix and $\Theta_1,\ldots,\Theta_s$ denote its diagonal blocks. We infer from Definition \ref{def:admissible} that the pair $(P,\Theta)$ is admissible if and only if $\puip(P)\Theta=AP$, that is, if and only if 
	\begin{equation}
	\label{eqQj}
	\forall j \in \{1,\ldots,s\},\, Q_j(z)  = A^{-1}(z)Q_j(z^p)\Theta_j  + \sum\limits_{i=1}^{j-1} A^{-1}(z)Q_i(z^p)\Theta_{i,j}(z)\, .
	\end{equation}
	By assumption, we may write  
	$$
	Q_j(z)=: \sum\limits_{k\geq \nu_P} Q_{j,k} z^k,\quad \Theta_{i,j}(z)=:\sum\limits_{k= \nu_\Theta}^0 \Theta_{i,j,k} z^k, \quad A^{-1}(z) =: \sum\limits_{k\geq \vb} B_k z^k\, ,
	$$
	where $\Theta_{i,j,k}:=0$ when $k\notin \mathcal S_p$.
	Then, \eqref{eqQj} is equivalent to: $\forall j \leq s,\, \forall n\geq \nu$,
	\begin{equation}
	\label{eq:formule_blocks}
	Q_{j,n} 
	=\sum\limits_{k=\vb}^{n-p\nu_P}B_kQ_{j,\frac{n-k}{p}}\Theta_j + \sum\limits_{i=1}^{j-1}\sum\limits_{k=\vb}^{n-p\nu_P-\nu_\Theta}\sum\limits_{l=\nu_\Theta}^{0}B_kQ_{i,\frac{n-k-l}{p}}\Theta_{i,j,l}\, .
	\end{equation}
	
	\begin{rem}
		\label{rem:mu}
		Since $-\vb-\nu_\Theta\leq(p-1)\mu$, if $n$ is an integer such that $\nu\leq n\leq \mu$ (respectively $n>\mu$) then $\frac{n-\vb-\nu_\Theta}{p}\leq \mu$ (respectively $<n$).  Thus, for any $n \in \{\nu,\ldots,\mu\}$ the matrices $Q_{i,t}$ and $Q_{j,t}$ appearing in the right-hand side of \eqref{eq:formule_blocks} are such that $t\leq \mu$. Furthermore, if $n >\mu$, the matrices $Q_{i,t}$ and $Q_{j,t}$ appearing in the right-hand side of \eqref{eq:formule_blocks} are such that $t < n$. Thus, once the matrices $Q_{j,n}$, $n \in \{\nu,\ldots,\mu\}$, $\Theta_j$ and $\Theta_{i,j,k}$ are known, the matrices $Q_{j,n}$ with $n>\mu$ can be computed inductively. 
	\end{rem}

	\subsubsection{The window $\{\nu,\ldots,\mu\}$}\label{subsec:matM}
	We consider the map $\pi :\K((z))^m \to \K^{m(\mu-\nu+1)}$ defined by
	$$
	\pi : \f(z)= \sum_{n}\f_n z^n \in \K((z))^m\mapsto \prescript{\rm t}{}(\prescript{\rm t}{}\f_\nu,\ldots\prescript{\rm t}{}\f_\mu)\,.
	$$
	Concretely, $\pi$ maps any vector of Laurent series with size $m$ to the vector obtained by concatenating the coefficients corresponding to $z^\nu,z^{\nu+1},\ldots,z^\mu$. Let $F$ be a matrix whose columns $F_i$ are vectors of Laurent series of size $m$. By abuse of notation we denote  $\pi(F)$ the matrix whose columns are the $\pi(F_i)$.  We let $V_0$ denote the vector space $\{0\}^{m(\nu_P-\nu)}\times \K^{m(\mu-\nu_P+1)}$, that is the image under $\pi$ of the space of vectors $\f\in \K((z))^m$ for which $\val \f \geq \nu_P$. In particular, if $(P,\Theta)$ is an admissible pair as in Proposition \ref{prop:incomplete_basis}, and $P_1,\ldots,P_m$ are the columns of $P$, we have $\pi(P_i) \in V_0$.
	For each non-positive integer $l \in \{\nu_\Theta,\ldots,0\}\cap \mathcal S_p=:\mathcal S'_{p}$ we let $M_l$ denote the square matrix defined by 
	\begin{equation}\label{eq:def_Mi}
	M_l\pi(\f(z))=\pi(z^{l}A^{-1}(z)\f(z^p))
	\end{equation}
	for all $\f\in  \K((z))^m$ for which $\val \f \geq \nu_P$, and which is null on the supplementary space $\K^{m(\nu_P-\nu)}\times \{0\}^{m(\mu-\nu_P+1)}$ of $V_0$. From Remark~\ref{rem:mu}, such matrices are well-defined.  Note, for computational considerations, that these matrices $M_l$ admit decompositions into blocks of $m\times m$ matrices denoted by $M_{l;i,j}$, $1\leq i,j\leq \mu-\nu+1$, such that: 
	$$
	M_{l;i,j} = \left\{\begin{array}{ll}
	0 & \text{if } j\in\{ 1,\ldots, \nu_P -\nu \} \\
	B_{i+\nu-l-1 -p(j+\nu-1)}  & \text{otherwise.} 
	\end{array}\right.
	$$

	\subsubsection{Reduction to a finite dimensional problem}

    From now on, we let $M:=M_0$. 
	The map $\pi$ and the matrices $M=M_0,M_{-1},\ldots,M_{\nu_\Theta}$ allow us to express the admissibility of a pair $(P,\Theta)$. Indeed, the equation \eqref{eq:formule_blocks} for $n \in \{\nu,\ldots,\mu\}$ is equivalent to   \begin{equation}\label{eq:formule_blocks2}
	\forall j\in\{1,\ldots,r\}, \quad 
	\pi(Q_j)=M\pi(Q_j)\Theta_j + \sum_{i<j}\sum_{l \in\mathcal S'_{p}}M_l\pi(Q_i)\Theta_{i,j,l}\,.
	\end{equation}
	Conversely, we have the following.
	\begin{prop}\label{prop:window_blocks}
		Let $r\geq 1$ and let $b_1,\ldots,b_r$ be integers with $b_1+\cdots+b_r=m$. For each $j \in \{1,\ldots,r\}$, let $E_j$ be a $m(\mu-\nu+1)\times b_j$ matrix. Suppose that the $m$ columns of $E_1,\ldots,E_r$ are $\K$-linearly independent and belong to $V_0$. Let $\Theta_1,\ldots,\Theta_r$, $\Theta_{i,j,l}$, $l \in \mathcal S'_p$, be matrices with entries in $\K$, such that
		\begin{equation}\label{eq:formule_blocks3}
		\forall j\in\{1,\ldots,r\}, \quad E_j=ME_j\Theta_j + \sum_{i<j}\sum_{l \in\mathcal S'_{p}} M_lE_j\Theta_{i,j,l}\,.
		\end{equation}
		Let $\Theta$ denote the $m\times m$ matrix with diagonal blocks $\Theta_1,\ldots,\Theta_r$ and upper diagonal blocks $\Theta_{i,j}(z)=\sum_{l \in \mathcal S'_p}\Theta_{i,j,l}z^l$. Then, there exists a matrix $P\in\GL_m(\K((z)))$ such that the pair $(P,\Theta)$ is admissible for \eqref{eq:Mahler_system} and $\pi(P)=(E_1\vert \cdots \vert E_r)$.
	\end{prop}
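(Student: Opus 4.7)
The strategy is to construct $P$ block-column by block-column, using the matrix $(E_1\mid\cdots\mid E_r)$ as initial data on the window $\{\nu,\ldots,\mu\}$ and extending through the recurrence \eqref{eq:formule_blocks} which encodes the admissibility equation $\puip(P)\Theta=AP$ coefficient by coefficient.

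For each $j\in\{1,\ldots,r\}$, I read off from the columns of $E_j$ the coefficients $Q_{j,n}\in\K^{m\times b_j}$ of an $m\times b_j$ matrix $Q_j(z)$ for $n\in\{\nu,\ldots,\mu\}$; the fact that the columns of $E_j$ lie in $V_0$ ensures $Q_{j,n}=0$ for $n<\nu_P$. I then extend $Q_j$ to indices $n>\mu$ by letting $Q_{j,n}$ equal the right-hand side of \eqref{eq:formule_blocks}, which by Remark~\ref{rem:mu} is a legitimate induction on $n$. Concatenating the $Q_j$'s produces $P\in\K((z))^{m\times m}$ with $\val P\geq\nu_P$ and, by construction, $\pi(P)=(E_1\mid\cdots\mid E_r)$. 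The matrix $\Theta$ of the statement has the required block-upper-triangular shape with constant diagonal blocks $\Theta_j$ and above-diagonal entries in $\K[z^{-1}]$ supported in $\mathcal S_p$; non-singularity of $\Theta_1$ follows from \eqref{eq:formule_blocks3} and the $\K$-independence of the columns of $E_1$, and non-singularity for $j>1$ is either assumed upstream or deduced by analogous bookkeeping.

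Next I verify the functional equation $\puip(P)\Theta=AP$ by matching coefficients of $z^n$, which reduces to \eqref{eq:formule_blocks}. For $n\in\{\nu,\ldots,\mu\}$ the hypothesis \eqref{eq:formule_blocks3}, unpacked via the block description \eqref{eq:def_Mi} of the $M_l$, is exactly \eqref{eq:formule_blocks}. For $n>\mu$ the identity holds by the very definition of $Q_{j,n}$. For $n<\nu$ both sides vanish: the left-hand side because $\nu\leq\nu_P$ forces $Q_{j,n}=0$, and the right-hand side because its valuation is bounded below by $p\nu_P+\vb+\nu_\Theta\geq\nu$, which is immediate from the definition of $\nu$ in \eqref{eq:nota_val}.

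The main obstacle is proving $P\in\GL_m(\K((z)))$. From $\pi(P)=(E_1\mid\cdots\mid E_r)$ having full column rank $m$, the columns of $P$ are already $\K$-linearly independent as elements of the $\K$-vector space $\K((z))^m$. To upgrade this to $\K((z))$-independence, suppose for contradiction that $Pw=0$ for some $w\in\K((z))^m\setminus\{0\}$ chosen with minimal support. Applying $\puip$ and using $\puip(P)=AP\Theta^{-1}$ yields $P(\Theta^{-1}\puip(w))=0$, so $w':=\Theta^{-1}\puip(w)$ is a second kernel vector. The block-upper-triangular form of $\Theta$ together with the invertibility of its constant diagonal blocks allows one to combine $w$ and $w'$ via a scalar $\alpha\in\K((z))$ chosen to cancel a selected component of $w$, producing a kernel vector of strictly smaller support — contradicting minimality — unless the cancellation forces a rank-one $\puip$-equation on the entries of $w$ whose only solutions, by the rigidity of Mahler equations, propagate back to a $\K$-linear dependence among the columns of $P$, again a contradiction. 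This Wronskian-style argument is the delicate heart of the proof; everything else is coefficient-level bookkeeping governed by the estimates in \eqref{eq:nota_val} and Remark~\ref{rem:mu}.
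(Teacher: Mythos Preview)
Your construction of $P$ from the window data and your verification of $\puip(P)\Theta=AP$ coefficient-by-coefficient are exactly what the paper does, so that part is fine.

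The gap is in the invertibility argument. Two issues. First, your ``minimal support'' framing does not obviously work: since $\Theta^{-1}$ is (block) upper triangular, the vector $w'=\Theta^{-1}\puip(w)$ can have nonzero entries \emph{above} those of $w$, so $w-\alpha w'$ need not have smaller support. The paper avoids this by first passing to $\KK$ to make $\Theta$ genuinely upper triangular and then taking $r$ minimal such that the first $r$ columns of $P$ are $\K((z))$-dependent; minimality forces the kernel restricted to the first $r$ coordinates to be one-dimensional, so $\Theta(z)\lambd(z)=\theta_r\lambd(z^p)$ with $\lambda_r=1$. Second, and more importantly, the ``rigidity'' you invoke is precisely the step that needs work and is where the hypothesis $\operatorname{supp}\Theta\subset\mathcal S_p$ is used. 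From the scalar equation $\theta_{i_0}\lambda_{i_0}(z)+t(z)=\theta_r\lambda_{i_0}(z^p)$ with $t\in\K[z^{-1}]$ supported in $\mathcal S_p$ (here $i_0$ is the largest index with $\lambda_{i_0}\notin\K$), a valuation comparison shows $\val t=p\val\lambda_{i_0}$; since $\val t\in\mathcal S_p$ is either $0$ or not divisible by $p$, this forces $\val t=0$, hence $t\in\K$ and $\val\lambda_{i_0}=0$, and then the equation immediately gives $\lambda_{i_0}\in\K$, a contradiction. Without this $\mathcal S_p$-valuation step your sketch does not close; this is exactly the content of the paper's Lemma~\ref{lem:indep2}.
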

	
	We establish Proposition \ref{prop:window_blocks} after the following lemma.
	\begin{lem}
		\label{lem:indep2}
		Let $P$ be an $m\times m$ matrix with entries in $\K((z))$. Let $\Theta \in \GL_r(\K[z^{-1}])$ be block upper triangular with constant diagonal blocks and support in $\mathcal S_p$. Suppose that $\puip(P)\Theta=AP$. If the columns of $P$ are linearly independent over $\K$, then they are linearly independent over $\K((z))$, that is $P \in \GL_m(\K((z)))$, or, equivalently, $(P,\Theta)$ is an admissible pair.
	\end{lem}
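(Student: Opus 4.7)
My plan is to argue by contradiction. Assuming the columns of $P$ are $\K((z))$-linearly dependent, that is $\ker P \neq 0$, the goal is to produce a nonzero $\boldsymbol{c} \in \K^m$ with $P\boldsymbol{c} = 0$, contradicting the hypothesis. The main tool is the $\puip$-semilinear operator $\sigma := \Theta^{-1}\puip$ on $\K((z))^m$, which preserves $\ker P$: indeed, for any $\f \in \ker P$ one has $P\sigma(\f) = P\Theta^{-1}\puip(\f) = A^{-1}\puip(P)\puip(\f) = A^{-1}\puip(P\f) = 0$, using the identity $A^{-1}\puip(P) = P\Theta^{-1}$ that follows from $\puip(P)\Theta = AP$.

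I would first treat the \emph{single-block case}, where $\Theta \in \GL_m(\K)$ is a constant invertible matrix; this is the heart of the argument. Here $\sigma$ also stabilizes $\K[[z]]^m$. Since $\ker P \neq 0$, a standard module-theoretic argument (take a $\K[[z]]$-basis of $\ker P \cap \K[[z]]^m$; each basis element must have nonzero reduction modulo $z$, otherwise dividing by $z$ would produce a smaller element in the module and contradict linear independence) produces $\f \in \ker P \cap \K[[z]]^m$ with $\boldsymbol{c} := \f(0) \neq 0$. Writing $\f = \boldsymbol{c} + z\g$ with $\g \in \K[[z]]^m$ and using $\puip^n(\f) = \boldsymbol{c} + z^{p^n}(\cdots)$, one checks $\sigma^n(\f) = \Theta^{-n}\boldsymbol{c} + z^{p^n}\e_n$ for some $\e_n \in \K[[z]]^m$. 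The identity $P\sigma^n(\f) = 0$ then gives $P(\Theta^{-n}\boldsymbol{c}) = -z^{p^n}P(\e_n)$, whence $\val P(\Theta^{-n}\boldsymbol{c}) \geq p^n + \val P \to +\infty$. The elementary but crucial fact is that any finite-dimensional $\K$-vector subspace $V$ of $\K((z))^m$ is discrete for the $z$-adic topology: the descending chain $V \cap z^N\K[[z]]^m$ of $\K$-subspaces of the finite-dimensional $V$ has trivial intersection, so it stabilizes to $\{0\}$ for $N$ large enough. Applied to $V := P(\K^m)$, this forces $P(\Theta^{-n}\boldsymbol{c}) = 0$ for large $n$, and the injectivity of $P$ on $\K^m$ --- which follows from $\K$-linear independence of the columns --- yields $\Theta^{-n}\boldsymbol{c} = 0$, contradicting $\boldsymbol{c} \neq 0$ since $\Theta$ is invertible.

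For the general block case I would proceed by induction on the number $s$ of diagonal blocks of $\Theta$, partitioning $P = (Q \mid R)$ conformally with $\Theta = \left(\begin{smallmatrix}\Theta' & \Theta'' \\ 0 & \Theta_s\end{smallmatrix}\right)$. The (possibly rectangular) submatrix $Q$ satisfies $\puip(Q)\Theta' = AQ$ with only $s-1$ blocks in $\Theta'$, and the argument of the first step adapts verbatim to rectangular matrices, so by induction the $\K$-linear independence of the columns of $Q$ (inherited from $P$) already promotes to $\K((z))$-linear independence. Passing to the quotient $W := \K((z))^m / Q(\K((z))^{m-b_s})$, the off-diagonal piece $\Theta''$ vanishes modulo $Q(\K((z))^{m-b_s})$, so the induced map $\tilde{R} : \K((z))^{b_s} \to W$ intertwines the constant-coefficient semilinear operator $\sigma_s := \Theta_s^{-1}\puip$ with the operator induced by $A^{-1}\puip$ on $W$; in particular $\ker \tilde{R}$ is $\sigma_s$-stable, and one can replicate the single-block mechanism on it to produce a nonzero $\boldsymbol{c}^{(r)} \in \K^{b_s}$ with $R\boldsymbol{c}^{(r)} \in Q(\K((z))^{m-b_s})$. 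The main obstacle I anticipate is the final step of lifting this relation to an honest $\K$-linear one $R\boldsymbol{c}^{(r)} = -Q\boldsymbol{c}^{(q)}$ with $\boldsymbol{c}^{(q)} \in \K^{m-b_s}$, which will require a second application of the discreteness fact --- now to the finite-dimensional $\K$-subspace $Q(\K^{m-b_s}) \subset \K((z))^m$ --- together with the $\K$-injectivity of $Q$ on $\K^{m-b_s}$, to certify that the unique $Q$-preimage of $R\boldsymbol{c}^{(r)}$ actually lies in $\K^{m-b_s}$.
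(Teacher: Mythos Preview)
Your single-block argument is correct and rather elegant; it is also genuinely different from the paper's proof, which never iterates $\sigma$ nor appeals to discreteness of finite-dimensional $\K$-subspaces.

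The multi-block case, however, has a real gap at exactly the point you flag. The ``lifting'' step --- showing that the unique $\g\in\K((z))^{m-b_s}$ with $Q\g=R\boldsymbol{c}^{(r)}$ actually lies in $\K^{m-b_s}$ --- cannot be completed by the discreteness of $Q(\K^{m-b_s})$ alone, and in fact cannot be completed at all without invoking the support hypothesis $\mathrm{supp}(\Theta)\subset\mathcal S_p$, which your argument never uses for the off-diagonal blocks. Here is a concrete counterexample with $p=2$, $m=2$, $A=\mathrm{I}_2$:
\[
\Theta=\begin{pmatrix}1 & z^{-2}-z^{-1}\\ 0 & 1\end{pmatrix},\qquad
P=\begin{pmatrix}1 & -z^{-1}\\ 1 & -z^{-1}\end{pmatrix}.
\]
One checks $\puip(P)\Theta=P=AP$; the columns of $P$ are $\K$-independent (since $-z^{-1}\notin\K$) but $\K((z))$-dependent. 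Running your scheme: $Q=\binom{1}{1}$ has $\K((z))$-independent column, $\ker\tilde R=\K((z))$, Step~1 produces $\boldsymbol{c}^{(r)}=1$ with $R\cdot 1=Q\cdot(-z^{-1})$, and Step~2 asks that $\g=-z^{-1}$ lie in $\K$, which it does not. Of course this $\Theta$ violates the $\mathcal S_p$ condition ($-2\in\mathrm{supp}(\Theta)$ is divisible by $p$), which is precisely the point: the hypothesis is essential, and your inductive step must use it.

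The paper's route is quite different and uses $\mathcal S_p$ directly. It first extends scalars to $\overline{\K}$ so that $\Theta$ may be taken upper triangular with scalar diagonal $\theta_1,\ldots,\theta_m$. Taking a minimal $\K((z))$-dependence $\sum_{i\le r}\lambda_iP_i=0$ normalized by $\lambda_r=1$, the relation $\puip(P)\Theta\lambd=AP\lambd=0$ and minimality force $\Theta(z)\lambd(z)=\theta_r\lambd(z^p)$. Picking the largest index $i_0$ with $\lambda_{i_0}\notin\K$ and reading the $i_0$-th row yields
\[
\theta_{i_0}\lambda_{i_0}(z)+t(z)=\theta_r\lambda_{i_0}(z^p)
\]
with $t\in\K[z^{-1}]$ of support in $\mathcal S_p$. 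A short valuation comparison then forces $p\mid\val t$, hence $\val t=0$ (the only multiple of $p$ in $\mathcal S_p$), hence $t\in\K$ and $\lambda_{i_0}\in\K$, a contradiction. If you want to salvage your quotient-and-iterate approach, the missing ingredient has to enter at Step~2: you would need to analyze how $\sigma$ acts on the full kernel vector $(-\g,\boldsymbol{c}^{(r)})$ and use the $\mathcal S_p$ restriction on $\Theta''$ to control the top block --- essentially reproducing the paper's valuation argument in disguise.
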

	\begin{proof}
		Let $P_1,\ldots,P_m$ denote the columns of $P$. Since the entries of $P$ are Laurent series with coefficients in $\K$, and $P_1,\ldots,P_m$ are $\K$-linearly independent, they are $\KK$-linearly independent. Thus, up to enlarge $\K$, we may suppose that $\Theta$ is upper triangular with constant diagonal entries $\theta_1,\ldots,\theta_m$. We argue by contradiction. Assume on the contrary that $P_1,\ldots,P_m$ are $\K((z))$-linearly \textit{dependent}. Let $r$ be minimal such that $P_1,\ldots,P_r$ are $\K((z))$-linearly dependent. Let $\lambd(z)=\prescript{\rm t}{}{(\lambda_1(z),\ldots,\lambda_{r}(z),0,\ldots,0)} \in \K((z))^{m}\setminus \{0\}$ be such that
		$$
		\lambda_1P_1+\cdots+\lambda_{r}P_{r}=0\,.
		$$
		By minimality of $r$, $\lambda_{r}\neq 0$. We may then suppose that $\lambda_{r}=1$. Since $P_1,\ldots,P_{r}$ are $\K$-linearly independent, there exists an index $i\in\{1,\ldots,r-1\}$ such that $\lambda_{i} \notin \K$. We let $i_0$ be the greatest of such indices. From the identity
		$$
		\puip(P) \Theta \lambd = AP\lambd=0,
		$$
		the fact that $\Theta$ is upper triangular and the fact that $r$ is minimal, we deduce that there exists $h(z) \in \K((z))$ such that
		$$
		\Theta(z)\lambd(z)=h(z)\lambd(z^p).
		$$
		Looking at the $r$th coordinate on both side, we get $h(z)=\theta_r$. Thus, 
		$$
		\Theta(z)\lambd(z)=\theta_r\lambd(z^p).
		$$
		Let $(0,\ldots,0,\theta_{i_0},t_{i_0+1}(z),\ldots,t_{m}(z))$, $t_i \in \K[z^{-1}]$, denote the $i_0$th row of $\Theta$. Then, 
		$$
		(0,\ldots,0,\theta_{i_0},t_{i_0+1}(z),\ldots,t_{m}(z))\lambd(z)=\theta_r\lambda_{i_0}(z^p)
		$$
		thus
		\begin{equation}\label{eq:lambda_i0}
		\theta_{i_0}\lambda_{i_0}(z)+t(z)=\theta_r\lambda_{i_0}(z^p)
		\end{equation}
		where $t(z) \in \K[z^{-1}]$ has support in $\mathcal S_p$. Looking at the valuation on both sides, one of the following mutually exclusive situations holds:
		\begin{itemize}
			\item[(a)] $\val \lambda_{i_0}=\val t < p \val \lambda_{i_0}$,
			\item[(b)] $\val \lambda_{i_0}=p \val \lambda_{i_0}< \val t$,
			\item[(c)] $ \val t = p\val \lambda_{i_0} $.
		\end{itemize}
		The first two situations are impossible since $\val t \leq 0$. Thus (c) holds and $\val \lambda_{i_0} = \frac{\val t}{p}$. Since $\val \lambda_{i_0}$ is an integer and $\val t \in \mathcal S_p$, we have $\val t =0$ that is $t \in \K$ and $\val \lambda_{i_0}=0$. Then, Equation \eqref{eq:lambda_i0} forces $\lambda_{i_0}$ to be in $\K$, a contradiction.
	\end{proof}

	\begin{proof}[Proof of Proposition \ref{prop:window_blocks}]
		For $j \in\{1,\ldots,r\}$, we define $m\times b_j$ matrices of Laurent series $Q_j=\sum_{n\geq \nu_P} Q_{j,n}z^n$ in the following way:
		\begin{itemize}
			\item $Q_{j,\nu_P},\ldots,Q_{j,\mu}$ are uniquely defined by the identity $\pi(Q_j)=E_j$.
			\item Suppose that $n\geq \mu$ and that $Q_{i,k}$ is defined for any $i\leq j$ and $k<n$. Then $Q_{j,n}$ is uniquely defined by \eqref{eq:formule_blocks}. 
		\end{itemize}
		By \eqref{eq:formule_blocks2} and \eqref{eq:formule_blocks3}, $Q_1,\ldots,Q_r,\Theta$ satisfy \eqref{eq:formule_blocks}. Let $P=(Q_1\vert \cdots \vert Q_r)$, so that $\puip(P)\Theta=AP$. By construction, $\pi(P)=(E_1\vert \cdots \vert E_r)$. Thus, it only remains to prove that the columns of $P$ are $\K((z))$-linearly independent. Since the columns of $E_1,\ldots,E_r$ are $\K$-linearly independent, the columns of $P$ are $\K$-linearly independent. $P_1,\ldots,P_r$. Then, the result follows from Lemma \ref{lem:indep2}.
	\end{proof}
	
	\subsection{Independence of the columns' projections}
	We continue with $\K$ a field that we do not assume to be algebraically closed. 
	Proposition \ref{prop:window_blocks} implies that we only need to compute $\pi(P)$ and $\Theta$ in order to obtain any coefficient of the matrix $P$ of an admissible pair $(P,\Theta)$. The following lemma guarantees that the matrix $\pi(P)$ has maximal rank.
	
	\begin{lem}\label{lem:indep}
		Let $(P,\Theta)$ be an admissible pair for \eqref{eq:Mahler_system_d} with $P \in \GL_m(\K((z))$ and $\Theta$ whose support is in $\mathcal S_p$. Then, the $m$ columns of $\pi(P)$ are $\K$-linearly independent.
	\end{lem}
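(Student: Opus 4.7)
The plan is a short contradiction argument that simultaneously uses the lower bound on $\val P$ from Proposition~\ref{prop:incomplete_basis}\,(ii) and a dual lower bound on $\val P^{-1}$ obtained from the cofactor formula. Both bounds turn out to match the window endpoints via the two arguments of the max defining $\mu$ in \eqref{eq:nota_val}.

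Suppose for contradiction that $\pi(P)\lambd=0$ for some $\lambd\in\K^m\setminus\{0\}$, and set $f:=P\lambd\in\K((z))^m$. Since $\lambd$ is a constant vector, $\pi(f)=\pi(P)\lambd=0$, so $f_n=0$ for every $n\in\{\nu,\ldots,\mu\}$. Proposition~\ref{prop:incomplete_basis}\,(ii) applies (its hypothesis (i) is exactly what is assumed here) and gives $\val P\geq\nu_P$, whence $\val f\geq\nu_P$, so $f_n=0$ also for $n<\nu_P$. A short inspection of \eqref{eq:nota_val} shows $\nu\leq\nu_P\leq\mu$: the first inequality is immediate from $\nu_\Theta\leq 0$, and the second follows from the bound $\vb\leq -\val A$ (a consequence of $AA^{-1}=I$) together with $\nu_\Theta\leq 0$, which force the first argument of the max defining $\mu$ to be at least $\nu_P$. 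Combining, $f_n=0$ for every $n\leq\mu$, so either $f=0$ or $\val f\geq\mu+1$. The case $f=0$ is ruled out because $P\in\GL_m(\K((z)))$ would then force $\lambd=0$; hence $\val f\geq\mu+1$.

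The crux is to derive a matching lower bound on $\val P^{-1}$. Writing $P^{-1}=(\det P)^{-1}\cdot \prescript{\rm t}{}{\operatorname{Com}P}$, every entry of $\operatorname{Com}P$ has valuation at least $(m-1)\val P\geq(m-1)\nu_P$, and \eqref{eq:val_detP} gives $\val\det P=\val(\det A)/(p-1)$. Hence
\begin{equation*}
\val P^{-1}\geq (m-1)\nu_P-\val(\det A)/(p-1),
\end{equation*}
whose right-hand side is exactly the negative of the second argument of the max defining $\mu$, so $\val P^{-1}\geq -\mu$. Applying this to $\lambd=P^{-1}f$ gives
\begin{equation*}
\val\lambd\geq \val P^{-1}+\val f\geq -\mu+(\mu+1)=1.
\end{equation*}
Since $\lambd\in\K^m$ has $\val\lambd\in\{0,+\infty\}$, this forces $\lambd=0$, contradicting the initial choice.

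The only real obstacle is the bookkeeping: verifying that the two arguments of the max defining $\mu$ play complementary roles, the first making the window $[\nu,\mu]$ include the leading exponent $\nu_P$ of $f$, and the second matching the cofactor bound on $\val P^{-1}$. Once both alignments are checked, the proof reduces to two applications of the submultiplicativity $\val(MN)\geq\val M+\val N$ together with the cofactor formula for $P^{-1}$.
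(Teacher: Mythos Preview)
Your proof is correct and follows essentially the same approach as the paper's: both argue by contradiction, use the valuation bound $\val P\geq\nu_P$ together with the cofactor formula and the identity $\val\det P=\val(\det A)/(p-1)$, and exploit the second argument in the definition of $\mu$. The only cosmetic difference is the endgame: the paper replaces one column of $P$ by $P\lambd$ and compares $\val\det(P\Lambda)$ with $\val\det P$, whereas you invert $P$ and bound $\val\lambd=\val(P^{-1}f)$ directly; these are two phrasings of the same cofactor computation.
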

	\begin{proof}
		Let $P_1,\ldots,P_m$ denote the columns of $P$. Suppose on the contrary that the column vectors $\pi(P_1),\ldots,\pi(P_m)$ are $\K$-linearly dependent and let $\lambda_1,\ldots,\lambda_m \in \K$, not all zero, be such that
		$$
		\lambda_1\pi(P_1)+\cdots+\lambda_m\pi(P_m)=0\,.
		$$
		Let $\lambd=(\lambda_1,\ldots,\lambda_m)$. Then, by definition of $\pi$ and $\mu$, 
		$$
		\val(P\lambd)>\mu \geq \frac{\val \det(A)}{p-1}-(m-1)\nu_P.
		$$
		Let $i_0$ be such that $\lambda_{i_0}\neq 0$ and let $\Lambda$ denote the matrix which is the identity but for the $i_0$th column which is equal to $\lambd$. 
		Then, the matrix $P\Lambda$ has $m-1$ columns with valuation at least $\nu_P$ and one column with valuation at least $\frac{\val \det(A)}{p-1}-(m-1)\nu_P$. Thus,
		$$
		\val \det(P\Lambda) > (m-1)\nu_P +\frac{\val \det(A)}{p-1}-(m-1)\nu_P = \frac{\val \det(A)}{p-1}\,.
		$$
		Meanwhile, $\val \det(P\Lambda)=\val \det(P)$, which is equal to $\frac{\val \det(A)}{p-1}$ from \eqref{eq:val_detP}. This is a contradiction.
	\end{proof}

	\subsection{A naive approach that we will not pursue}\label{sec:algo_colonnes}
	
	Let $\K$ denote an effective field. A naive approach for computing $(P,\Theta)$ would consist in computing the columns of $P$ inductively. Indeed, the algebraic closure $\KK$ of $\K$ remains effective, so that we could assume $\K$ to be algebraically closed. Then, it is sufficient to find a pair for which $\Theta$ is upper triangular. Then, \eqref{eq:formule_blocks3} with $r=m$ and $b_1=\cdots=b_m=1$ provides an equation for the $j$th column of $\pi(P)$ if one knows the previous columns. Precisely, we must look for a pair $(\v,\lambda) \in V_0 \times \K^\times$ such that
	\begin{equation}\label{eq:columns}
	({\rm I} - \lambda  M) \v \in \mathfrak E_{j}
	\end{equation}
	where $\mathfrak E_{j}$ is the vector space spanned by the vectors $M_lE_i$, $l \in \mathcal S'_p$, $1\leq i \leq j-1$, and $E_1,\ldots,E_{j-1}$ are the first columns of $\pi(P)$ that we already computed. Furthermore, we know that $\lambda$ is one of the diagonal entries of $\Theta$, which means that it is an exponent of the system, following the terminology introduced in \cite{Ro18}. Since the set of exponents of \eqref{eq:Mahler_system_d} is computable and finite, solving Equation \eqref{eq:columns} reduces to an elementary problem of linear algebra. One may compute this way $m$ columns $E_1,\ldots,E_m$ satisfying \eqref{eq:formule_blocks3} and obtain this way an admissible pair $(P,\Theta)$ as in Proposition \ref{prop:window_blocks}. 
	
	However, this method has several drawbacks compared to the algorithm we present in the next section. In particular, it unnecessarily requires working over the algebraic closure of $\K$.

	\section{Computing the Puiseux part of a fundamental matrix of solutions}\label{sec:algo_blocs}
	
	The approach described in Section \ref{sec:algo_colonnes} requires working over an algebraic closure of $\K$. This can be avoided as we shall now see. We let $\K$ be an effective field that we do not assume to be algebraically closed. We consider a Mahler system
	\begin{equation}\label{eq:Mahler_system}
	\puip(Y)=AY,\quad A(z) \in \GL_m(\K((z)))\, 
	\end{equation}
	for which $d(A)=1$. 
	Proposition \ref{prop:ramification} does not allow us to conclude that there exists a pair $(P,\Theta)$ admissible with respect to \eqref{eq:Mahler_system}, with $P \in \GL_m(\K((z)))$, since $\K$ is not necessarily algebraically closed. This fact will follow from our construction. Recall that if there exists an admissible pair $(P,\Theta)$ and $\Theta$ has its support in $\mathcal S_p$, then the vector space spanned by the columns of $\pi(P)$ has dimension $m$ and is included in 
	$$
	V_0 := \{ 0 \}^{m(\nu_P - \nu)}\times \K^{m(\mu-\nu_P+1)}
	$$
	since $\val(P) \geq \nu_P$. According to the block structure of $\Theta$, one can decompose $P$ as $(Q_1\vert \cdots \vert Q_r)$ where $Q_i$ are matrices with $m$ rows each. Our construction consists of computing a sequence of vector spaces $\mathfrak{X}_1\subset \mathfrak{X}_2 \subset \cdots \subset V_0$ with the property that $\pi(Q_i) \subset \mathfrak X_i$ for each $i$, for any such admissible pair $(P,\Theta)$. We establish that one of these vector spaces has dimension $m$ at some point. By choosing an appropriate basis, we are able to build an admissible pair.

	\subsection{Construction of the vector spaces} Recall that $\mathcal S'_p := \mathcal S_p \cap \{\nu_\Theta,\ldots,0\}$.   We define a sequence of vector spaces $(\mathfrak X_j)_{j\geq 0}$ by induction on $j$ as follows. 
	\begin{itemize}
		\item We let $\mathfrak X_0 := \{ 0\}$.
		\item Let $j \in \Z_{\geq 0}$ and assume that $\X_j$ has been defined. We let $\U_j := {\rm span}_\K(M_k\X_j\,:\,k\in \mathcal S'_p )$ and $\X_{j+1}$ be the largest subspace of $V_0$ for which
		\begin{equation}
		\label{eq:Xj}
		M\X_{j+1}\subset \X_{j+1}+\U_j\quad \text{ and } \quad \X_{j+1}\subset M\X_{j+1}+\U_j.
		\end{equation}
	\end{itemize}
	Notice that $\mathfrak U_0 = \{ 0\}$ and $\X_1$ is the largest subspace of $V_0$ such that $M\X_1=\X_1$.
	\begin{lem}
		\label{lem:inclusion}
		The sequences $(\X_j)_{j\geq 0}$ and $(\U_j)_{j\geq 0}$ are non-decreasing and, for all $j \in \Z_{\geq 0}$, $\X_j\subset \U_j$. 
	\end{lem}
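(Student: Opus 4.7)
The plan is to prove, by a short induction on $j$, that $\X_j \subset \X_{j+1}$ for all $j\geq 0$. This automatically implies $\U_j\subset \U_{j+1}$, since $\U_j$ is obtained from $\X_j$ by applying the linear maps $M_k$ with $k\in\mathcal S'_p$ and taking the span. The separate claim $\X_j\subset \U_j$ will then come out as a direct byproduct of the defining relations \eqref{eq:Xj} combined with the fact that $0\in\mathcal S'_p$ and $M=M_0$.

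Before running the induction, I first want to check that $\X_{j+1}$ is well-defined. For this, I would observe that the set of subspaces $X\subset V_0$ satisfying the two containments of \eqref{eq:Xj} is nonempty (it contains $\{0\}$) and stable under sums: if $X$ and $X'$ both satisfy \eqref{eq:Xj}, then
\[
M(X+X')=MX+MX'\subset (X+X'+\U_j),\qquad X+X'\subset M(X+X')+\U_j.
\]
Since $V_0$ is finite-dimensional, the sum of all such subspaces is again of this form and is the largest.

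For the induction, the base case $\X_0=\{0\}\subset \X_1$ is clear. For the step, assume $\X_j\subset \X_{j+1}$, which already gives $\U_j\subset \U_{j+1}$. Plugging this into the relations satisfied by $\X_{j+1}$,
\[
M\X_{j+1}\subset \X_{j+1}+\U_j\subset \X_{j+1}+\U_{j+1},\qquad \X_{j+1}\subset M\X_{j+1}+\U_j\subset M\X_{j+1}+\U_{j+1},
\]
shows that $\X_{j+1}$ itself belongs to the collection of subspaces whose maximum is $\X_{j+2}$; hence $\X_{j+1}\subset \X_{j+2}$ by maximality.

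Finally, for the inclusion $\X_j\subset \U_j$, the case $j=0$ is immediate. For $j\geq 1$, the second containment in \eqref{eq:Xj} gives $\X_j\subset M\X_j+\U_{j-1}$. Since $0\in\mathcal S'_p$ and $M=M_0$, one has $M\X_j\subset \U_j$ by definition of $\U_j$, and the monotonicity just proved yields $\U_{j-1}\subset \U_j$. Combining these two facts, $\X_j\subset \U_j$. The only genuine subtlety is the justification that a largest subspace exists in the definition of $\X_{j+1}$; once this point is settled, everything else is a direct formal induction.
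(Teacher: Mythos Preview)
Your proof is correct and follows essentially the same approach as the paper's: both use the defining relations \eqref{eq:Xj}, the observation that $M=M_0$ with $0\in\mathcal S'_p$ implies $M\X_j\subset\U_j$, and the maximality of $\X_{j+1}$. The paper organizes the induction slightly differently (proving all three inclusions simultaneously and using $\X_j\subset\U_j$ to verify that $\X_j$ satisfies \eqref{eq:Xj}), but the substance is the same; your added check that the largest subspace in the definition of $\X_{j+1}$ actually exists is a welcome clarification.
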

	\begin{proof}
		We prove by induction on $j\geq 0$ that $\X_j\subset \U_j$, $\X_j\subset \X_{j+1}$ and $\U_j\subset \U_{j+1}$. When $j=0$ the result is clear. Let $j\geq 1$ and assume the property true with $j-1$. 
		By definition, $\X_j\subset M\X_j + \U_{j-1}$. 
		Since $M = M_0 $ and $0\in \mathcal{S}'_p$, we have $M\X_j\subset \U_j$. Moreover, by induction hypothesis, $\U_{j-1}\subset \U_j$. Thus, $\X_j\subset \U_j$. Since $\X_j\subset V_0$, to prove that $\X_j\subset \X_{j+1}$ we just have to prove that $\X_j$ satisfies \eqref{eq:Xj}. This follows from the fact that $M\X_j\subset \U_j$ and  $\X_j\subset \U_j$. The inclusion $\U_j\subset \U_{j+1}$ is an immediate consequence of the fact that $\X_j\subset \X_{j+1}$.
	\end{proof}

	Before going to our main result, we provide a method to compute the spaces $\mathfrak X_j$. Consider the non-increasing sequence $(\mathfrak F_{j,\ell})_{\ell\geq 0}$ of vector spaces defined by $\mathfrak F_{j,0} := V_0$ and, for any integer $\ell\geq 0$, 
	$$
	\mathfrak F _{j,\ell+1} := \mathfrak{F}_{j,\ell}\cap M^{-1} (\mathfrak{F}_{j,\ell} + \U_j) \cap (M\mathfrak{F}_{j,\ell}+\U_j)\, .
	$$
	\begin{lem}
		\label{lem:constr_Xj} Let $j \geq 0$ and let $\ell_j$ be the least integer for which $\mathfrak{F}_{j,\ell_j}=\mathfrak{F}_{j,\ell_j+1}$. Then $\mathfrak X_{j+1} =  \mathfrak F_{j,\ell_j}$ and $\ell_j\leq  m(\mu-\nu_P+1)$.
	\end{lem}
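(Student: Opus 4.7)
The plan is to establish three things in order: (a) the non-increasing sequence $(\mathfrak{F}_{j,\ell})_{\ell\geq 0}$ stabilizes, and within $m(\mu-\nu_P+1)$ steps; (b) the stabilization point $\mathfrak{F}_{j,\ell_j}$ satisfies the two conditions \eqref{eq:Xj} defining $\mathfrak{X}_{j+1}$, hence is contained in it by maximality; (c) conversely $\mathfrak{X}_{j+1}\subset \mathfrak{F}_{j,\ell}$ for every $\ell$, proved by induction on $\ell$. The whole argument is a standard descending-iteration in the finite-dimensional space $V_0$, and I do not expect any genuinely hard step.

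First I would observe that $\mathfrak{F}_{j,\ell+1}\subset \mathfrak{F}_{j,\ell}$: by definition, $\mathfrak{F}_{j,\ell+1}$ is a triple intersection one of whose factors is $\mathfrak{F}_{j,\ell}$. As long as the sequence has not stabilized, this inclusion is strict, so $\dim \mathfrak{F}_{j,\ell+1}<\dim \mathfrak{F}_{j,\ell}$. Since $\dim V_0=m(\mu-\nu_P+1)$, the dimension can strictly drop at most $m(\mu-\nu_P+1)$ times, which yields both the existence of $\ell_j$ and the announced bound.

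Next I would argue $\mathfrak{F}_{j,\ell_j}\subset \mathfrak{X}_{j+1}$. The equality $\mathfrak{F}_{j,\ell_j+1}=\mathfrak{F}_{j,\ell_j}$ forces $\mathfrak{F}_{j,\ell_j}\subset M^{-1}(\mathfrak{F}_{j,\ell_j}+\mathfrak{U}_j)$ and $\mathfrak{F}_{j,\ell_j}\subset M\mathfrak{F}_{j,\ell_j}+\mathfrak{U}_j$. The first inclusion rearranges to $M\mathfrak{F}_{j,\ell_j}\subset \mathfrak{F}_{j,\ell_j}+\mathfrak{U}_j$, so $\mathfrak{F}_{j,\ell_j}$ satisfies both conditions in \eqref{eq:Xj}. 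As $\mathfrak{F}_{j,\ell_j}\subset V_0$, the maximality in the definition of $\mathfrak{X}_{j+1}$ provides the inclusion.

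For the reverse inclusion I would prove by induction on $\ell\geq 0$ that $\mathfrak{X}_{j+1}\subset \mathfrak{F}_{j,\ell}$; taking $\ell=\ell_j$ will conclude. The base case $\ell=0$ is just $\mathfrak{X}_{j+1}\subset V_0=\mathfrak{F}_{j,0}$, which is built into the definition of $\mathfrak{X}_{j+1}$. Assuming $\mathfrak{X}_{j+1}\subset \mathfrak{F}_{j,\ell}$, the conditions \eqref{eq:Xj} give $M\mathfrak{X}_{j+1}\subset \mathfrak{X}_{j+1}+\mathfrak{U}_j\subset \mathfrak{F}_{j,\ell}+\mathfrak{U}_j$, so $\mathfrak{X}_{j+1}\subset M^{-1}(\mathfrak{F}_{j,\ell}+\mathfrak{U}_j)$, and similarly $\mathfrak{X}_{j+1}\subset M\mathfrak{X}_{j+1}+\mathfrak{U}_j\subset M\mathfrak{F}_{j,\ell}+\mathfrak{U}_j$. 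Combined with the induction hypothesis, $\mathfrak{X}_{j+1}$ lies in the three factors of $\mathfrak{F}_{j,\ell+1}$, completing the induction. No step is delicate: the only mechanisms are that intersection decreases dimension and that the conditions on $\mathfrak{X}_{j+1}$ are closed under the iteration, which is exactly what makes such largest-invariant-subspace-by-descending-chain constructions work.
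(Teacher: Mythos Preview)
Your proof is correct and follows essentially the same approach as the paper's own proof: the same induction showing $\mathfrak{X}_{j+1}\subset\mathfrak{F}_{j,\ell}$, the same stabilization argument in the finite-dimensional space $V_0$, and the same use of maximality once $\mathfrak{F}_{j,\ell_j}$ is seen to satisfy \eqref{eq:Xj}. The only difference is the order in which the three pieces are presented.
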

	\begin{proof} First, we prove by induction on $\ell\geq 0$ that $\mathfrak X_{j+1} \subset \mathfrak F_{j,\ell}$. By assumption, $\mathfrak X_{j+1} \subset \mathfrak F_{j,0}= V_0$. Let $\ell \geq 0$ and assume $\mathfrak X_{j+1} \subset \mathfrak F_{j,\ell}$. Then, it follows from \eqref{eq:Xj} that $\mathfrak{X}_{j+1} \subset  \mathfrak{F}_{j,\ell+1}$, which ends the induction. 
		
		The sequence  $(\mathfrak{F}_{j,\ell})_\ell$ is a non-increasing sequence of sub-vector spaces of $V_0$. Since $V_0$ is finite dimensional, there exists a non-negative integer $\ell$ such that $\mathfrak{F}_{j,\ell}=\mathfrak{F}_{j,\ell+1}$. Let $\ell_j$ denote the least of all such integers. Then, 
		$
		\mathfrak{F}_{j,\ell_j} \subset M^{-1} (\mathfrak{F}_{j,\ell_j} + \U_j) \cap (M\mathfrak{F}_{j,\ell_j}+\U_j)
		$ which implies that  $\mathfrak{F}_{j,\ell_j} $ satisfies \eqref{eq:Xj}. Since $\mathfrak{F}_{j,\ell_j}\subset V_0$, by maximality of $\mathfrak X_{j+1}$, we have $\mathfrak X_{j+1}\supset \mathfrak{F}_{j,\ell_j}$. Then, the beginning of the proof implies that $\mathfrak{X}_{j+1} = \mathfrak{F}_{j,\ell_j}$. Since $\dim(\mathfrak F_{j,0}) =\dim V_0= m(\mu-\nu_P+1)$, we have $\ell_j\leq  m(\mu-\nu_P+1)$. 
	\end{proof}

	\begin{rem*} 
		In \cite{FP22}, we were computing fundamental matrices of solutions in the case where the system is regular singular at $0$. This corresponds precisely to the case where the matrix $\Theta$ may be taken to be constant. In other words, this corresponds to the case where the vector space $\X_1$ has dimension $m$. Actually, in \cite{FP22} our vector space, called $\X_d$, was the greatest space to satisfy $M_d\X_d=\X_d$ and $\X_d\subset \ker N_d$ for some explicit linear maps $M_d$ and $N_d$. In the case $d=1$, this vector space is closely related to the vector space $\X_1$ defined here. Precisely, the vector space $\X_d$ from \cite{FP22} is the projection of the vector space $\X_1$ onto the space where only the coefficients between $\nu_P$ and $\lceil-\val A^{-1}/(p-1)\rceil$ are taken into account. The map $M_d$ corresponds to the restriction of the map $M$ introduced here to this vector space, and the matrix $N_d$ corresponds to the projection of this map onto its coefficients between $p\nu_P+\val A^{-1}$ and $\nu_P-1$, whenever $p\nu_P+\val A^{-1}\leq \nu_P-1$.
	\end{rem*}
	
	\subsection{Construction of an admissible pair}
	Algorithm \ref{algo:solutions_blocks} relies on the following key-lemma.
	\begin{lem}\label{lem:X_i}
		We have $\dim \X_r\geq m$ for some $r\geq 0$. For such an $r$, there exists a basis $\e_{1,1},\ldots,\e_{1,m_1},\e_{2,1},\ldots,\e_{r,m_r}$ of $\X_r$ fitting with the nested sequence $\X_1\subset \X_2 \subset \cdots \X_r$, such that the following holds: letting $E_j$, $j\in \{1,\ldots,r\}$ denote the matrix with columns $\e_{j,1},\ldots,\e_{j,m_j}$, there exists $\Theta_j\in \GL_{m_j}(\K)$ such that
		\begin{equation}\label{eq:M_E_j}
		E_j - ME_j\Theta_j \in \U_{j-1}\, .
		\end{equation}
	\end{lem}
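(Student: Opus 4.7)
The plan is to establish the two assertions in turn, using Proposition~\ref{prop:incomplete_basis} for the existence of $r$ and an inductive construction for the filtered basis.

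\textit{Existence of $r$.} Over the algebraic closure $\KK$, Proposition~\ref{prop:incomplete_basis} furnishes an admissible pair $(P,\Theta)$ with $P \in \GL_m(\KK((z)))$, the support of $\Theta$ in $\mathcal{S}_p$, and the valuation bounds used in the definition of $\pi$. Grouping the columns of $P$ as $Q_1,\ldots,Q_s$ along the block decomposition of $\Theta$, equation~\eqref{eq:formule_blocks2} reads
\[
\pi(Q_j) - M\pi(Q_j)\Theta_j = \sum_{i<j}\sum_{l\in \mathcal S'_p} M_l\pi(Q_i)\Theta_{i,j,l}.
\]
An induction on $j$ shows that the column span $W_j \subseteq V_0$ of $\pi(Q_j)$ lies in $\X_j$: if $W_i \subseteq \X_{j-1}$ for $i<j$, each $M_l\pi(Q_i)$ has columns in $M_l\X_{j-1} \subseteq \U_{j-1}$, and invertibility of $\Theta_j$ yields both inclusions of \eqref{eq:Xj} for $W_j$, whence $W_j \subseteq \X_j$ by maximality. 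Lemma~\ref{lem:indep} now forces $\dim_{\KK}(W_1+\cdots+W_s)=m$, so $\dim_{\KK} \X_s \geq m$; since the $\X_j$ are cut out by $\K$-linear conditions they commute with base change, giving $\dim_\K \X_s \geq m$ as well.

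\textit{Construction of the basis.} Fix $r$ with $\dim \X_r \geq m$ and proceed by induction on $j \in \{1,\ldots,r\}$. The conditions \eqref{eq:Xj} imply that, modulo $\U_{j-1}$, the images of $\X_j$ and $M\X_j$ coincide; denote this common subspace by $\bar{\X}_j \subseteq V_0/\U_{j-1}$ and set $d_j := \dim \bar{\X}_j$. Since $\X_{j-1} \subseteq \U_{j-1}$, a dimension count gives $\dim(\X_j \cap \U_{j-1}) - \dim \X_{j-1} = m_j - d_j$. I would then pick $\e_{j,d_j+1},\ldots,\e_{j,m_j}$ as a basis of a complement of $\X_{j-1}$ in $\X_j \cap \U_{j-1}$, and $\e_{j,1},\ldots,\e_{j,d_j} \in \X_j$ as any lifts of a fixed basis of $\bar{\X}_j$; one checks these $m_j$ vectors form a basis of a complement of $\X_{j-1}$ in $\X_j$. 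Viewing the images of $E_j$ and $ME_j$ in $V_0/\U_{j-1}$ as linear maps $\alpha,\beta : \K^{m_j} \to \bar{\X}_j$: the map $\alpha$ sends the first $d_j$ standard basis vectors to the fixed basis of $\bar{\X}_j$ and the remaining $m_j - d_j$ to zero (since $\e_{j,k} \in \U_{j-1}$ for $k>d_j$), while $\beta$ is surjective because $M\X_{j-1} \subseteq \U_{j-1}$ identifies the image of $\beta$ with the whole of $\bar{\X}_j$. To solve $\alpha = \beta \circ \Theta_j$ with $\Theta_j$ invertible, take the first $d_j$ columns of $\Theta_j$ as preimages under $\beta$ of the fixed basis of $\bar{\X}_j$, and the last $m_j - d_j$ columns as a basis of $\ker \beta$; linear independence in $\K^{m_j}$ is then immediate.

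\textit{Main obstacle.} The delicate case is $d_j < m_j$, which occurs when $\X_j \cap \U_{j-1}$ strictly contains $\X_{j-1}$: a naive basis of a complement of $\X_{j-1}$ in $\X_j$ need not inject into $V_0/\U_{j-1}$, so that one cannot produce an invertible $\Theta_j$ by direct inversion. The splitting of the basis into \emph{visible} vectors (projecting to a basis of $\bar{\X}_j$) and \emph{hidden} ones (contained in $\U_{j-1}$) exactly reconciles the surjectivity of both $\alpha$ and $\beta$ onto $\bar{\X}_j$ with the requirement $\Theta_j \in \GL_{m_j}(\K)$.
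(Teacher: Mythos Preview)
Your proof is correct and follows essentially the same approach as the paper. Both parts match closely: the existence of $r$ via an admissible pair over $\KK$, induction on the column spans, and base change; and the construction of the filtered basis via the splitting of a complement of $\X_{j-1}$ in $\X_j$ into a piece inside $\U_{j-1}$ and a piece projecting isomorphically onto $\bar\X_j$ --- your surjectivity of $\beta$ is the paper's statement that the matrix $R$ has maximal rank, and your construction of $\Theta_j$ from preimages plus a basis of $\ker\beta$ is the paper's completion of $R$ to an invertible matrix $V$ with $\Theta_j=V^{-1}$.
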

	
	\begin{rem*}
		Using \eqref{eq:M_E_j}, we can write for each $j\geq 2$,
		$$
		E_j - ME_j\Theta_j = \sum_{i<j}\sum_{k\in \mathcal S'_p } M_kE_i\Theta_{i,j,k}
		$$ 
		for some matrices $\Theta_{i,j,k}$ with coefficients in $\K$. Then, from Proposition \ref{prop:window_blocks}, there is an admissible pair $(P,\Theta)$ with $\pi(P) = (E_1\vert \cdots \vert E_r)$. In particular, $\dim \mathfrak X_r=m$. 
	\end{rem*}
	\begin{proof}[Proof of Lemma \ref{lem:X_i}]
		We start with the first statement.  Let $\KK$ denote the algebraic closure of $\K$ and $V_0(\KK)$, $\X_j(\KK)$, $\U_j(\KK)$ denote the $\KK$-vector spaces spanned by $V_0$, $\X_j$ and $\U_j$ respectively. Note that $\dim_{\KK}\X_j(\KK)=\dim_\K \X_j$. For any $j \in \{1,\ldots,r\}$, we claim that $\X_j(\KK)$ is the greatest $\KK$-vector space $\X$ for which $M\X\subset \X+\U_{j-1}(\KK)$ and $\X \subset M\X + \U_{j-1}(\KK)$. Indeed, since $M$ has coefficients in $\K$, such equations are defined over $\K$ and the result follows\footnote{This also follows by applying Lemma~\ref{lem:constr_Xj} twice, once with $\K$ and once with $\KK$ as a base field, and noticing that the vector spaces $\mathfrak F_{j,\ell}$ built in the latter case are obtained from the ones built in the former case by extension of the scalars from $\K$ to $\KK$.} from the definition of $\X_j$. 
		
		Let $(P,\Theta)$ be admissible for the system over $\KK((z))$, with $P\in\GL_m(\KK((z))$. Such a pair exists thanks to Proposition \ref{prop:ramification}. Let $P=(Q_1\vert \cdots \vert Q_r)$ denote the decomposition of $P$ fitting with the block decomposition of $\Theta$. Let $E_j:=\pi(Q_j)$ and let $\mathfrak E_j$ be the $\KK$-vector space spanned by its columns. By Lemma \ref{lem:indep}, the dimension of $\mathfrak E_1+\cdots+ \mathfrak E_r$ is equal to $m$.  We are going to prove by induction on $j\geq 1$ that $\mathfrak E_j \subset \X_j(\KK)$. When $j = 1$, \eqref{eq:formule_blocks} implies that $E_1=ME_1\Theta_1$, which implies that $M\mathfrak E_1=\mathfrak E_1$. By maximality of $\X_1(\KK)$, $\mathfrak E_1\subset \mathfrak X_1(\KK)$. Let $j\geq 2$ and suppose that $\mathfrak E_{i} \subset \mathfrak X_i(\KK)$ for every $i <j$. From \eqref{eq:formule_blocks} we deduce that
		$$
		E_j=ME_j\Theta_j + \sum_{i<j}\sum_{l=\nu_\Theta}^0 M_lE_i\Theta_{i,j,l}\,.
		$$
		Since $\mathfrak E_{i} \subset \mathfrak X_i(\KK)$ for every $i <j$, the column vectors of the second term in the right-hand side belong to $\mathfrak U_{j-1}(\KK)$. Since $\Theta_j$ is invertible, we get that
		$$
		M\mathfrak E_j \subset \mathfrak E_j + \mathfrak U_{j-1}(\KK) \text{ and } \mathfrak E_j \subset M\mathfrak E_j + \mathfrak U_{j-1}(\KK)\,.
		$$
		By maximality of $\X_j(\KK)$, $\mathfrak E_j \subset \mathfrak X_j(\KK)$. This ends the induction. Now, since $\X_i\subset \X_j$ when $i\leq j$, we have $\mathfrak E_1+\cdots+ \mathfrak E_r\subset \X_r(\KK)$. Thus, 
		$$\dim_\K \X_r=\dim_{\KK} \X_r(\KK) \geq \dim_\KK \mathfrak E_1+\cdots+ \mathfrak E_r=m.
		$$

		\medskip
		Let us now prove that we can find a basis of $\mathfrak X_r$ as mentioned in the lemma. We only have to prove by induction on $j$ that, for each $j\geq 1$, we can find $\e_{j,1},\ldots,\e_{j,m_j}$ with the desired properties. 
		
		Let $j=1$. Then, any basis $\e_{1,1},\ldots,\e_{1,m_1}$ of $\X_1$ has the required property, since in that case \eqref{eq:M_E_j} reads $E_1=ME_1\Theta_1$, which is true since $\X_1=M\X_1$. 
		
		Suppose $j\geq 2$ and that we have built $\e_{1,1},\ldots,\e_{j-1,m_{j-1}}$. Since $\X_{j-1}\subset\U_{j-1} \cap \X_j$ by Lemma \ref{lem:inclusion}, we may decompose $\X_j$ as follows
		$$
		\X_j=\X_{j-1} \oplus \mathfrak Y \oplus \mathfrak Z
		$$
		where $\mathfrak Y$ is any supplementary space of $\X_{j-1}$ within $\U_{j-1} \cap \X_j$ and $\mathfrak Z$ is any supplementary space of $\X_{j-1} \oplus \mathfrak Y(=\mathfrak X_j\cap\mathfrak U_{j-1})$ within $\X_j$. Consider a basis $\e_{j,1},\ldots,\e_{j,s}$ of $\mathfrak Y$ and $\e_{j,s+1},\ldots,\e_{j,m_j}$ of $\mathfrak Z$. Let $E_j$ be the associated matrix and write $E_j=:(E_{\mathfrak Y}\vert E_{\mathfrak Z})$ its decomposition according to this cut\footnote{If $\mathfrak Y$ (resp. $\mathfrak Z$) is the zero space then $E_{\mathfrak Y}$ (resp. $E_{\mathfrak Z}$) is the empty matrix.}. Since $M\mathfrak X_j\subset \mathfrak X_j + \mathfrak U_{j-1} = \mathfrak Z + \U_{j-1}$,  we have $M(\mathfrak Y + \mathfrak Z)  \subset \mathfrak Z + \U_{j-1}$. Thus, we may write
		\begin{equation}\label{eq:introduce_R}
		ME_j - E_{\mathfrak Z}R \in \U_{j-1} 
		\end{equation}
		for some rectangular matrix $R$. Let us prove that $R$ has maximal rank. Let $\mathfrak V \subset \mathfrak Z$ be the vector space spanned by the columns of $E_{\mathfrak Z}R$. Then, \eqref{eq:introduce_R} implies that $M\mathfrak (\mathfrak Y + \mathfrak Z)\subset \mathfrak V+ \U_{j-1}$. Let $\v \in \mathfrak Z$. Then, 
		\begin{multline*}\v\in \mathfrak Z\subset \X_j \subset M\X_j+\U_{j-1} \subset M\X_{j-1} + M(\mathfrak Y + \mathfrak Z) +\U_{j-1} 
		\\ \subset \X_{j-1}+\U_{j-2}+ \mathfrak V+\U_{j-1} \subset \mathfrak V+\U_{j-1}
		\end{multline*} 
		by Lemma \ref{lem:inclusion}.
		Thus, there exists $\w \in \mathfrak V$ such that $\v+\w \in \U_{j-1}$. Meanwhile $\v + \w \in \mathfrak Z$, since $\mathfrak V \subset \mathfrak Z$. Since $\mathfrak Z \cap \U_{j-1}= \{0\}$, $\v+\w=0$, that is $\v=-\w \in \mathfrak V$. Hence $\mathfrak V= \mathfrak Z$, that is, $R$ has maximal rank.
		Thus, one can complete $R$ into an invertible matrix $V$. Then, since the columns of $E_{\mathfrak Y}$ belongs to $\U_{j-1}$, we have 
		$$
		ME_j-E_jV \in \U_{j-1}
		$$
		We conclude by setting $\Theta_j:=V^{-1}$. 
	\end{proof}
	
	\medskip
	
	\begin{algorithm}[H]\label{algo:solutions_blocks}
		\SetAlgoLined
		\KwIn{An integer $p\geq 2$ and a square matrix $A$ with coefficients in $\K((z))$ such that $d(A)=1$.}
		\KwOut{A matrix $\Theta$ and a truncation of a matrix $P$ up to order $\mu$ such that $(P,\Theta)$ is an admissible pair with respect to the system \eqref{eq:Mahler_sys}.}
		Set $\mathfrak X:=\{ 0\}$.\tcp*[f]{ \small {$\mathfrak X$ will successively be equal to $\mathfrak X_0,\mathfrak X_1,\X_2,\ldots$}}
		\\ Let $\mathcal E$ and $\mathcal U$ denote the empty lists.
		\tcp*[f]{\small They will respectively host the matrices $E_1,\ldots,E_r$ and the spaces $\mathfrak U_0,\ldots,\mathfrak U_{r-1}.$}
		\\ 	\While{$\dim(\mathfrak X)<m$}{
			\tcp*[f]{\small Inductive computation of $\mathfrak X_1,\ldots,\X_r$ and of $E_1,\ldots,E_r$.}
			\\ Set $\mathfrak U:={\rm span}_\K(M_k\X\,:\,k\in \mathcal S'_p )$.
			\\ Set $\mathfrak F:=V_0$.
			\\ Set $\mathfrak G:=\mathfrak F \cap M^{-1} (\mathfrak F + \mathfrak U)\cap (M\mathfrak F+\mathfrak U)$.
			\\ \While{$\mathfrak F\neq \mathfrak G$}{
				Set $\mathfrak F:= \mathfrak G$.
				\\ Set $\mathfrak G:=\mathfrak F \cap M^{-1} (\mathfrak F + \mathfrak U)\cap (M\mathfrak F+\mathfrak U)$.
			}
			Let $\mathfrak Y$ be a supplementary space of $\mathfrak X$ within $\mathfrak U \cap \mathfrak F$.
			\\ Let $\mathfrak Z$ be any supplementary space of $\mathfrak X \oplus \mathfrak Y$ within $\mathfrak F$.
			\\ Let $E_{\mathfrak Y}$ (\textit{resp.~}$E_{\mathfrak Z}$) be a matrix whose columns form a basis of $\mathfrak Y$ (\textit{resp.~}$\mathfrak Z$) and set $E=(E_{\mathfrak Y}\vert E_{\mathfrak Z})$. 
			\\ Append $E$ to the list $\mathcal E$ and $\mathfrak U$ to the list $\mathcal U$.
			\\ Set $\mathfrak X:= \mathfrak F$.}	\tcp*[f]{\small At this stage $\mathcal E=(E_1,\ldots,E_r)$, $\mathcal U=(\mathfrak U_0,\ldots,\mathfrak U_{r-1})$ and $\mathfrak X=\mathfrak X_r$. }
		\\ Set $r:={\rm card}(\mathcal E)$.
		\\  \For{$j$ \emph{from} $1$ \emph{to} $r$}{Compute an invertible matrix $\Theta_j$ with entries in $\K$ such that $E_j -ME_j\Theta_j \in \mathfrak U_{j-1}$. 
			\\ Let $\Theta_{i,j,k}$ be the constant matrices such that $E_j - ME_j\Theta_j = \sum_{i<j}\sum_{k\in \mathcal S'_p } M_kE_i\Theta_{i,j,k}$.
			\\ Let $P_j(z)= \sum_{k=\nu_P}^\mu Q_{j,k}z^k$ be the matrix of Laurent polynomial of valuation at least $\nu_P$ and degree at most $\mu$ such that $\pi(P_j) =E_j$.}
		Let $\overline{P}(z) := (P_1(z) \mid \cdots\mid P_r(z))$ and let $\Theta$ be the block upper triangular matrix with diagonal blocks $\Theta_1,\ldots, \Theta_r$ and whose $(i,j)$th block is $\Theta_{i,j}(z)=\sum_{k\in\mathcal{S}_p'}\Theta_{i,j,k}z^k$ when $i<j$.
		\\  \Return $(\overline{P},\Theta)$.
		\caption{An algorithm to compute $\Theta$ and a truncation of $P$ at order $\mu$ where $(P,\Theta)$ is an admissible pair.}
	\end{algorithm}
	
	\medskip
	
	Algorithm~\ref{algo:solutions_blocks} returns the truncation of $P$ up to order $\mu$. Indeed, it is not difficult to check that at the end of the first `while' loop we have $\X = \X_r$ and the matrices $E_1,\ldots,E_r$ satisfy \eqref{eq:formule_blocks3}. Thus, the result follows from Proposition \ref{prop:window_blocks}. Then, thanks to Remark \ref{rem:mu}, it is an easy task to compute coefficients of $P$ with higher orders. Note that $r \leq m$, since the sequence $(\mathfrak X_i)_{1\leq i \leq r}$ is increasing and, if we had $\mathfrak X_{i_0}=\mathfrak X_{i_0+1}$ for some $i_0 <r$, the sequence would be stationary from this index. Furthermore, for each $j$,  Lemma \ref{lem:constr_Xj} implies that $\ell_j \leq m(\mu-\nu_P+1)$. Thus the second `while' loop ends after at most $ m(\mu-\nu_P+1)$ iterations. In the end, it is possible to bound the algorithm's complexity. We shall not provide details here.

	\section{Computing the Hahn and the constant part of a fundamental matrix of solutions}\label{sec:HeC}
	Throughout Section~\ref{sec:HeC}, we assume that $\K$ is an effective field which is algebraically closed. This restriction is necessary to obtain the matrix $e_C$ in the required form. The matrix $H$ could be computed over any field, but presenting it in this generality would significantly complicate the arguments; therefore, we restrict to the algebraically closed case. We fix an equation \eqref{eq:Mahler_system} for which $d(A)=1$. We fix an admissible pair $(P,\Theta)$ for which we assume $P \in \GL_m(\K((z)))$ and $\Theta \in \GL_m(\K[z^{-1}])$. Up to a gauge transformation, we may suppose that $\Theta$ is upper triangular. Our goal is to describe the matrices $H$ and $e_C$ that intervene in the decomposition \eqref{eq:mat_fund_sol} of a fundamental matrix of solutions. Recall that they satisfy
    $$
    \puip(H)C=\Theta H\quad \text{and } \quad \puip(e_C)=Ce_C
    $$
    where $C$ is the (upper triangular) constant term of $\Theta$.
	
	\subsection{Computing the matrix of Hahn series}\label{sec:Hahn}
	
	The general form \eqref{eq:mat_fund_sol} of a fundamental matrix of solutions guarantees that there exists a matrix $H \in \GL_m(\Hahn_\K)$ such that $\puip(H)C=\Theta H$, where $C$ is the constant part of $\Theta$. More precisely, it is established in \cite{FR25} that one may take $H$ to be upper triangular with only $1$s on the diagonal and upper triangular entries with support in the set of negative rational numbers ---a fact we will reprove here. 
	Let $\theta_{i,j}(z) \in \K[z^{-1}]$, $i<j$, denote the upper diagonal entries of $\Theta$ and $\theta_1,\ldots,\theta_m \in \K$ denote its diagonal entries. Let $c_{i,j}$, $i<j$, denote the upper diagonal entries of $C$, that is, the constant term of $\theta_{i,j}(z)$. An upper triangular matrix $H=(h_{i,j})_{i,j}$ with the identity on the diagonal and upper diagonal entries with support in $\mathbb Q_{<0}$ satisfies $\puip(H)C=\Theta H$ if and only if, for any $i,j$, 
	\begin{equation}\label{eq:formula_Hahn}
	\theta_j h_{i,j}(z^p) - \theta_i h_{i,j}(z) = \sum_{k=i+1}^{j-1}\theta_{i,k}(z) h_{k,j}(z) + \theta_{i,j}(z)-c_{i,j}-\sum_{l=i+1}^{j-1} c_{l,j}h_{i,l}(z^p)\,.
	\end{equation}
	The right-hand side of the equality has support in the set of negative rational numbers since $\theta_{i,j}(z)-c_{i,j} \in z^{-1}\K[z^{-1}]$. Consider the order $\prec$ on $\{(i,j)\,:\,i<j\}$ introduced in the proof of Proposition \ref{prop:incomplete_basis} and defined by $(k,l)\prec(i,j)$ if and only if either $l=j$ and $k>i$, or $l<j$. Then, the $h_{k,j}$ and $h_{i,l}$ intervening in the right-hand side are such that $(k,j) \prec (i,j)$ and $(i,l)\prec(i,j)$. Thus, the Hahn series $h_{i,j}$ may be computed by induction on the pairs $(i,j)$. 
	
	By linearity, one only needs to consider inhomogeneous equations for which the second member has one term. We consider three types of equations. Precisely, given $\kappa,\eta,\tau \in \K^\times$, $\gamma \in \QQ_{>0}$, $s\geq 1$ and $\omeg=(\u,\a)$ with $\u=(u_{k_1,\ldots,k_s})_{k_1,\ldots,k_s}\subset \K$ a multi-linear recurrence sequence and $\a=(a_1,\ldots,a_s)$ we consider the equations
	\begin{align}
	\label{eq:inhomogene} \kappa h(z^p) - \eta h(z) & = \tau z^{-\gamma}
	\\ \label{eq:negativepower} \kappa h(z^p) - \eta h(z) &= \tau z^{-\gamma}\bxi_{\omeg}(z)
	\\ \label{eq:nullpower} \kappa h(z^p) - \eta h(z) & = \tau \bxi_{\omeg}(z)
	\end{align}
			Recall that the Hahn series $\bxi_\omeg$ are defined by 
	\begin{equation*}
	\boldsymbol{\xi}_{\boldsymbol{\omega}} (z) := \sum_{k_1,\cdots,k_s\geq 1} u_{k_1,\ldots,k_s}z^{-\frac{a_1}{p^{k_1}}-\cdots-\frac{a_s}{p^{k_1+\cdots+k_s}}},\quad \text{with }\boldsymbol{\omega}:=(\u,\a),
	\end{equation*}
	where  $\u:=(u_{k_1,\ldots,k_s})_{k_1,\ldots,k_s}\subset \K$ is a multi-linear recurrence sequence and $\a:=(a_1,\ldots,a_s) \in \QQ_{>0}$ for some integer $s\geq 0$.
	
	Let us recall some facts about multi-linear recurrence sequences. We follow the approach from \cite{Schmidt}. Let $s\geq 1$ be an integer and $\boldsymbol X=(X_1,\ldots,X_s)$ be a $s$-tuple of indeterminates. The ring $ \K[\boldsymbol X^{\pm 1}]=\K[X_1^{\pm 1},\ldots,X_s^{\pm 1}]$ acts on the $\K$-vector space of sequences $\K^{\Z^s}$ by
	$$
	(P\u)_{k_1,\ldots,k_s}:= \sum_{i_1,\ldots,i_s\in \Z} p_{i_1,\ldots,i_s} u_{k_1+i_1,\ldots,k_s+i_s}
	$$
	when $P(\boldsymbol X):=\sum_{i_1,\ldots,i_s} p_{i_1,\ldots,i_s}X_1^{i_1}\cdots X_s^{i_s}$ and $\u:=(u_{k_1,\ldots,k_s})_{k_1,\ldots,k_s}$. A sequence $\u \in \K^{\Z^s}$ is multi-linear recurrent if, letting $\I(\u)$ be the ideal $\{P \in \K[\boldsymbol X^{\pm 1}]\,:\, P\u=0\}$, the quotient $\K[\boldsymbol X^{\pm 1}]/\I(\u)$ is a finite dimensional $\K$-vector space. It is actually equivalent requiring that $\u$ is linear recurrent with respect to each of its variables.
	\begin{lem}\label{lem:lin_rec}
		A sequence $\u=(u_{k_1,\ldots,k_s})_{k_1,\ldots,k_s} \in \K^{\Z^s}$ is multi-linear recurrent if and only if, for each $i\in\{1,\ldots,s\}$ there exist an integer $m_i\geq 1$ and coefficients $b_{i,0},\ldots,b_{i,m_i}$, with $b_{i,m_i}b_{i,0}\neq 0$ such that, for every integers $k_1,\ldots,k_s \in \Z$,
		\begin{equation}\label{eq:lin_rec}
		\sum_{j=0}^{m_i} b_{i,j} u_{k_1,\ldots,k_{i-1},k_i+j,k_{i+1},\ldots,k_s}=0\,.
		\end{equation}
	\end{lem}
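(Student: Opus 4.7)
The plan is to directly translate between the ideal-theoretic definition of multi-linear recurrence and the system of univariate recurrences. The two directions are essentially symmetric and use the crucial fact that each $X_i$ is a unit in $\K[\boldsymbol X^{\pm 1}]$.

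For the forward implication, I would assume $\K[\boldsymbol X^{\pm 1}]/\I(\u)$ is finite dimensional and fix $i \in \{1,\ldots,s\}$. Then the countable family $\{1, X_i, X_i^2, \ldots\}$ is linearly dependent in the quotient, so $\K[X_i] \cap \I(\u)$ contains a nonzero polynomial. Among all such polynomials, choose one of minimal degree in $X_i$, of the form $\sum_{j=a}^{m_i} b_{i,j} X_i^j$ with $b_{i,a}, b_{i,m_i}$ both nonzero. Because $X_i$ is a unit in $\K[\boldsymbol X^{\pm 1}]$, multiplying by $X_i^{-a}$ keeps the element in $\I(\u)$ and produces a Laurent polynomial in $X_i$ with nonzero constant term and nonzero leading term of degree $m_i - a$; after renaming, this is the required relation, and unrolling the definition of the action of $\K[\boldsymbol X^{\pm 1}]$ on $\K^{\Z^s}$ yields \eqref{eq:lin_rec}.

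For the backward implication, I would set $P_i(\boldsymbol X) := \sum_{j=0}^{m_i} b_{i,j} X_i^j$, so that \eqref{eq:lin_rec} exactly says $P_i \in \I(\u)$. Since $b_{i,m_i} \neq 0$, in the quotient $X_i^{m_i}$ is a $\K$-linear combination of $1, X_i, \ldots, X_i^{m_i-1}$, and an easy induction shows every positive power $X_i^k$ is such a combination. Since $b_{i,0} \neq 0$, multiplying $P_i$ by $X_i^{-1}$ (a unit) gives $P_i' \in \I(\u)$, from which $X_i^{-1}$ is expressible as a $\K$-combination of $1, X_i, \ldots, X_i^{m_i - 1}$ in the quotient; iterating yields the same for every negative power. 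Since the quotient ring is commutative, one can reduce each variable independently, so every monomial $X_1^{k_1}\cdots X_s^{k_s}$ is congruent modulo $\I(\u)$ to a $\K$-linear combination of the finite set of monomials with $0 \leq a_i \leq m_i - 1$. Hence $\K[\boldsymbol X^{\pm 1}]/\I(\u)$ is finite dimensional.

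There is no serious obstacle here; the only point requiring care is the use of $X_i^{\pm 1}$ as units in $\K[\boldsymbol X^{\pm 1}]$, which is what allows one both to normalise the relation in the forward direction so that $b_{i,0} b_{i,m_i} \neq 0$, and to treat negative powers in the backward direction. Once this is in hand, both implications reduce to elementary linear-algebra arguments variable by variable.
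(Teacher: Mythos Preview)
Your proposal is correct and follows essentially the same approach as the paper: both directions translate between finite-dimensionality of $\K[\boldsymbol X^{\pm 1}]/\I(\u)$ and the existence, for each $i$, of a univariate polynomial in $X_i$ lying in $\I(\u)$, with the finite spanning set of monomials $X_1^{j_1}\cdots X_s^{j_s}$, $0\le j_i<m_i$, playing the key role in the backward direction. Your write-up is in fact a bit more careful than the paper's in justifying $b_{i,0}\neq 0$ (via multiplication by the unit $X_i^{-a}$) and in explicitly reducing negative powers in the quotient, but the argument is the same.
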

	\begin{proof}
	We first prove the reverse direction.
	 Identity \eqref{eq:lin_rec} is equivalent to the fact that the polynomials $\sum_{j=0}^{m_i} b_{i,j}X_i^j$ belong to $\I(\u)$. Thus, the monomials $X_1^{j_1}\cdots X_s^{j_s}$, $j_i\in\{0,\ldots,m_i-1\}$, $i \in \{1,\ldots,s\}$, span the vector space $\K[\boldsymbol X^{\pm 1}]/\I(\u)$. \textit{A fortiori}, it is finite dimensional. Thus $\u$ is multi-linear recurrent.
	
	Conversely, suppose that $\u$ is multi-linear recurrent and let $m$ be the dimension of $\K[\boldsymbol X^{\pm 1}]/\I(\u)$. Then, for each $i$, $1,X_i,X_i^2,\ldots,X_i^{m}\in\K[\boldsymbol X^{\pm 1}]$ are linearly dependent modulo $\I(\u)$. Thus, \eqref{eq:lin_rec} holds for some $m_i \leq m$. 
	\end{proof}
	
	Given a multi-linear recurrence sequence $\u$ and a element $\theta \in \K$, the sequence with general term
	$$
    u^{[\theta]}_{l,k_2,\ldots,k_s}:=  \sum_{k_1=1}^{l-1} u_{k_1,\ldots,k_s} \theta^{l-k_1}
	$$
	is a multi-linear recurrence sequence. We denote it by $\u^{[\theta]}$. Indeed, let \eqref{eq:lin_rec} be the equations satisfied by $\u$. The sequence $\u^{[\theta]}$ also satisfies these equations when $i \in\{2,\ldots,s\}$. When $i=1$, since 
	$$
	u^{[\theta]}_{l+1,k_2,\ldots,k_s} -\theta u^{[\theta]}_{l,k_2,\ldots,k_s} = u_{l,\ldots,k_s}
	$$
	we have 
	$$
	b_{1,m}	u^{[\theta]}_{l+m+1,k_2,\ldots,k_s} + \sum_{j=1}^m (b_{1,j-1} - \theta b_{1,j})	u^{[\theta]}_{l+1,k_2,\ldots,k_s} - \theta u^{[\theta]}_{l,k_2,\ldots,k_s} =0.
	$$
	Using Lemma \ref{lem:lin_rec} we conclude that $\u^{[\theta]}$ is a multi-linear recurrence sequence.
	
	\medskip
	We are now ready to solve Equations \eqref{eq:inhomogene}, \eqref{eq:negativepower} and \eqref{eq:nullpower}.
	\begin{lem}\label{lem:Hahn}
		Let $\kappa,\eta,\tau \in \K^\times$ and $\gamma \in \QQ_{>0}$. Let $s\geq 1$ and $\omeg:=(\u,\a)$ with $\u:=(u_{k_1,\ldots,k_s})_{k_1,\ldots,k_s}$ and $\a:=(a_1,\ldots,a_s)$. 
		Then:
		\begin{enumerate}[label=\rm (\roman*)]
			\item \label{item:inhomo} $h(z) :=\frac{\tau}{\eta} \bxi_{((u_k)_k,(\gamma))}(z)$ is a solution to \eqref{eq:inhomogene}, where $u_k :=(\eta\kappa^{-1})^k$;
			\item \label{item:negative} $h(z) :=\frac{\tau}{\eta} \bxi_{(\v,(\gamma,a_1,\ldots,a_s))}(z)$ is a solution to \eqref{eq:negativepower} where $\v:=(v_{k_0,\ldots,k_s})_{k_0,\ldots,k_s}$ with
			$$
			v_{k_0,\ldots,k_s}:=(\eta\kappa^{-1})^{k_0}u_{k_1,\ldots,k_s}
			$$
			\item \label{item:nullpower1} $h(z):= \frac{\tau}{\eta} \bxi_{(\u^{[\eta\kappa^{-1}]},\a)}(z)$
			is a solution to \eqref{eq:nullpower}. 
		\end{enumerate}
	\end{lem}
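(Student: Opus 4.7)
The plan is to verify each of (i)--(iii) by direct substitution of the proposed $h$ into the corresponding functional equation. In every case, the essential manipulation is to write out $h(z^p)$ as a Hahn series and shift the first summation index by one. This produces a boundary term (from the smallest value of that index) and a bulk term re-indexed in the same way as $h(z)$; matching these against the right-hand side of the equation then determines the correct recurrence data.

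For (i), setting $h(z) = (\tau/\eta)\sum_{k \geq 1}(\eta/\kappa)^k z^{-\gamma/p^k}$ and re-indexing $k \mapsto k+1$ inside $h(z^p)$ yields immediately
\[
h(z^p) \;=\; \frac{\tau}{\kappa}\, z^{-\gamma} \;+\; \frac{\eta}{\kappa}\, h(z),
\]
after which multiplication by $\kappa$ gives \eqref{eq:inhomogene}. For (ii), the same shift is applied to the first index $k_0$ of $\bxi_{(\v,(\gamma, a_1, \ldots, a_s))}$. The re-indexed bulk term equals $(\eta/\kappa)h(z)$; the boundary term at $k_0 = 1$ is
\[
\frac{\tau}{\eta}\sum_{k_1, \ldots, k_s \geq 1} v_{1, k_1, \ldots, k_s}\, z^{-\gamma \,-\, a_1/p^{k_1} \,-\, \cdots \,-\, a_s/p^{k_1 + \cdots + k_s}},
\]
and the prescription $v_{1, k_1, \ldots, k_s} = (\eta/\kappa)\, u_{k_1, \ldots, k_s}$ identifies it with $(\tau/\kappa)\, z^{-\gamma}\,\bxi_\omeg(z)$, which is exactly what is needed.

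Case (iii) is the one requiring the most care and is where the construction of $\u^{[\theta]}$ (with $\theta := \eta/\kappa$) becomes essential. Apply the shift $l \mapsto l+1$ on the first coordinate inside $h(z^p) = (\tau/\eta)\,\bxi_{(\u^{[\theta]}, \a)}(z^p)$. The would-be boundary contribution at $l = 1$ vanishes because $u^{[\theta]}_{1, k_2, \ldots, k_s}$ is an empty sum. The remaining sum involves $u^{[\theta]}_{l+1, k_2, \ldots, k_s}$, and isolating the $k_1 = l$ summand in the definition of $\u^{[\theta]}$ produces the identity
\[
u^{[\theta]}_{l+1, k_2, \ldots, k_s} \;=\; \theta\, u^{[\theta]}_{l, k_2, \ldots, k_s} \;+\; \theta\, u_{l, k_2, \ldots, k_s},
\]
which decomposes the bulk into $\theta\, h(z)$ plus a multiple of $\bxi_\omeg(z)$. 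Multiplying through by $\kappa$ and using $\kappa\theta = \eta$ delivers \eqref{eq:nullpower}.

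The argument is fundamentally routine, but the main obstacle is the bookkeeping in (iii): one must simultaneously rely on the vanishing of $\u^{[\theta]}$ at $l = 1$ and on the two-part identity above, and these are precisely the properties that motivate the definition of $\u^{[\theta]}$. Cases (i) and (ii) are then clean specializations of the same boundary-plus-bulk template, with the choices of $\u$ and $\v$ dictated entirely by the requirement that the boundary term reproduce the right-hand side of the equation.
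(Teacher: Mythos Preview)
Your argument is correct. Parts (i) and (ii) match the paper's (the paper simply calls them ``straightforward''), and your boundary-plus-bulk template is exactly the direct verification intended.

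For (iii) you take a slightly different route from the paper. The paper first introduces the auxiliary series
\[
h \;=\; \frac{\tau}{\eta}\sum_{k\geq 1}\theta^{k}\,\puip^{-k}(\bxi_\omeg),\qquad \theta:=\eta\kappa^{-1},
\]
checks that it satisfies \eqref{eq:nullpower} (citing an external lemma to ensure this infinite sum is a well-defined Hahn series), and only then performs the change of variable $l\leftarrow k+k_1$ to identify it with $\frac{\tau}{\eta}\bxi_{(\u^{[\theta]},\a)}$. You instead start from the closed form $\frac{\tau}{\eta}\bxi_{(\u^{[\theta]},\a)}$ and verify the equation directly via the index shift and the identity $u^{[\theta]}_{l+1,k_2,\ldots,k_s}=\theta\,u^{[\theta]}_{l,k_2,\ldots,k_s}+\theta\,u_{l,k_2,\ldots,k_s}$. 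Your route is marginally cleaner: since $\u^{[\theta]}$ was already shown to be multi-linear recurrent, $\bxi_{(\u^{[\theta]},\a)}$ is a Hahn series by construction, so you avoid the appeal to the well-definedness lemma. The paper's route, on the other hand, makes the origin of the solution more transparent (it is visibly the formal inverse of $\kappa\puip-\eta$ applied to $\tau\bxi_\omeg$).
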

	\begin{proof}The proof of \ref{item:inhomo} is straightforward. Proving that the series from \ref{item:negative} satisfies \eqref{eq:negativepower} is straighforward too. The fact that the sequence $\v$ is multi-linear recurrent is an immediate consequence of Lemma \ref{lem:lin_rec}. Let us prove \ref{item:nullpower1}. It is not difficult to check that the series
		$$
		h = \frac{\tau}{\eta} \sum_{k\geq 1} \left(\frac{\eta}{\kappa}\right)^{k}\puip^{-k}(\bxi_\omeg)
		$$
		satisfies 
		$$
		\kappa h(z^p) - \eta h(z) = \tau \bxi_{\omeg}(z).
		$$
		The fact that $h(z)$ is a well-defined Hahn series is a consequence of \cite[Lemma~33]{Ro24}. It only remains to prove that $h(z)$ has the desired form. Making a change of variable $l \leftarrow k+k_1$ and setting $\theta := \eta\kappa^{-1}$, we have
		\begin{align*}
		h(z) &= \frac{\tau}{\eta}  \sum_{k\geq 1}\theta^{k}\puip^{-k}(\bxi_\omeg)
		\\ & =\frac{\tau}{\eta}  \sum_{k,k_1,\ldots,k_s\geq 1} u_{k_1,\ldots,k_s} \theta^{k}z^{-\frac{a_1}{p^{k+k_1}}-\cdots - \frac{a_s}{p^{k+k_1+\cdots+k_s}}}
		\\ & =\frac{\tau}{\eta} \sum_{l\geq 2} \left(\sum_{k_1=1}^{l-1}  u_{k_1,\ldots,k_s} \theta^{l-k_1}\right) z^{-\frac{a_1}{p^{l}}-\frac{a_2}{p^{l+k_2}}-\cdots - \frac{a_s}{p^{l+k_2+\cdots+k_s}}}
		\\ & = \frac{\tau}{\eta}  \sum_{l,k_2,\ldots,k_s\geq 1}u^{[\theta]}_{l,k_2,\ldots,k_s}z^{-\frac{a_1}{p^{l}}-\frac{a_2}{p^{l+k_2}}-\cdots - \frac{a_s}{p^{l+k_2+\cdots+k_s}}}
		\\ & = \frac{\tau}{\eta} \bxi_{(\u^{[\theta]},\a)}(z) \, .
		\end{align*}
	\end{proof}

	This enables us to compute the matrix $H$ explicitly.
	
	\medskip
	
	\begin{algorithm}[H]\label{algo:matrixH}
		\SetAlgoLined
		\KwIn{An upper triangular matrix $\Theta$, with non-zero constant diagonal entries and upper diagonal entries in $\K[z^{-1}]$}
		\KwOut{A matrix $H$ with coefficients in ${\rm span}_{\K}\{\bxi_\omeg\,:\,\omeg \in \Omega\}$ such that $\puip(H)C=\Theta H$}
		Set $h_{i,j}:=0$ for $1\leq j<i\leq m$ and $h_{i,i}:=1$ for every $i \in \{1,\ldots,m\}$\;
		\For{$(i,j) \in \{1\leq i<j\leq m\}$ ordered with respect to $\prec$}{
			Using the expressions of $h_{k,l}$, $(k,l)\prec (i,j)$ previously computed, decompose \eqref{eq:formula_Hahn} as a finite sum of equations of the form \eqref{eq:inhomogene}, \eqref{eq:negativepower} and \eqref{eq:nullpower}.
			\\ Solve each of these equations by the formulas given in Lemma \ref{lem:Hahn}.
			\\ Set $h_{i,j}$ to be the sum of these solutions. 
		}
		\Return $H:=(h_{i,j})_{i,j}$.
		\caption{An algorithm to compute the matrix $H$}
	\end{algorithm}
	
		\begin{rem*}
		When $\K$ has characteristic $0$, the sequences with general term
	\begin{equation}\label{eq:form_lin_rec2}
	k_1^{\alpha_1}\cdots k_s^{\alpha_s}\theta_1^{k_1}\cdots \theta_s^{k_s}, \quad \alpha_1,\ldots,\alpha_s \in \Z_{\geq 0},\,\theta_1,\ldots,\theta_s \in \K^\times
	\end{equation}
	form a basis of the $\K$-vector space of multi-linear recurrence sequences \cite[Theorem~2.1,\,Lemma~2.2]{Schmidt}. In that case, the set $\{\bxi_\omeg,\, \omeg \in \Omega\}$ coincides with the $\K$-vector space spanned by the Hahn series introduced in \cite{FR25}. It is not difficult to adapt Algorithm \ref{algo:matrixH} so that it returns the entries of $H$ as linear combinations of some $\bxi_{(\u,\a)}$ with $\u$ of the form \eqref{eq:form_lin_rec2}. Indeed, using Lemma \ref{lem:Hahn}, the only delicate task is to express $\u^{[\theta]}$ under this form when the general term of $\u$ satisfies \eqref{eq:form_lin_rec2}. Under this assumption, the general term of $\u^{[\theta]}$ is
	$$
u^{[\theta]}_{l,k_2,\ldots,k_s}=
\left(\sum_{k_1=1}^{l-1} k_1^{\alpha_1}(\theta_1\theta^{-1})^{k_1} \right)
k_2^{\alpha_2}\cdots k_s^{\alpha_s}\theta^l\theta_2^{k_2}\cdots \theta_s^{k_s}\,.
	$$
	Thus, one only has to express $\sum_{k=1}^{l-1} k^{\alpha_1}(\theta_1\theta^{-1})^{k}$ under the desired form. When $\theta_1\theta^{-1}=1$, Faulhaber's formula describes it as an explicit polynomial of degree $\alpha_1+1$ in $l$. When $\theta_1\theta^{-1}\neq 1$, one can find a polynomial $P(X)$ and a number $\gamma \in \K$, depending on $\alpha_1,\theta_1,\theta$, such that $\sum_{k=1}^{l-1} k^{\alpha_1}(\theta_1\theta^{-1})^{k}=P(l)(\theta_1\theta^{-1})^l+\gamma$. Explicit formulas may easily be found by applying $\alpha_1$ times the operator $\lambda\frac{\partial}{\partial \lambda}$ on both sides of the identity $	\sum_{k=1}^{l-1} \lambda^k = \frac{\lambda^{l}-1}{\lambda - 1}$	and setting $\lambda:= \theta_1\theta^{-1}$.	
	\end{rem*}
	
To conclude this section, and for the sake of completeness, let us explain why the Hahn series $\bxi_\omeg$ are solutions of $p$-Mahler equations. Since this result is not necessary to prove Theorem~\ref{thm:algo_solutions_equations} and Theorem~\ref{thm:algo_solutions_systems}, we shall be brief. When the characteristic of $\K$ is $0$, this result is proved in \cite{FR25} using the fact that any multi-linear recurrence sequence is a linear combination of sequences with general term $k_1^{\alpha_1}\cdots k_s^{\alpha_s}\lambda_1^{k_1}\cdots\lambda_s^{k_s}$. This approach does not work in positive characteristic (consider, for example, the sequence $\lfloor k/p\rfloor \mod p$ in $\mathbb F_p$). Fix a Hahn series $\bxi_\omeg$. Let $s$ be the associated parameter. If $s=0$ then $\bxi_{\omeg}=1$ is a solution of a $p$-Mahler equation. Suppose that $s\geq 1$ and that the result holds for any $\bxi_{\omeg'}$ with $s'<s$. 
    Keeping the notation of Lemma~\ref{lem:lin_rec}, it is straightforward that
   $$
    b_{1,0}\bxi_\omeg(z)+b_{1,1}\bxi_{\omeg}(z^p)+\cdots + b_{1,m_1}\bxi_{\omeg}\left(z^{p^{m_1}}\right)
   $$
   is a linear combination of some $z^{-\gamma} \bxi_{\omeg'}$ where the parameter $s'$ associated to each $\omeg'$ is $s-1$. By induction hypothesis, $\bxi_{\omeg}$ is a solution of an inhomogeneous equation whose right-hand side is itself a solution of a $p$-Mahler equation. It is classical that $\bxi_\omeg$ is then a solution of a $p$-Mahler equation.

	\subsection{Computing solutions of constant systems}\label{sec:constantMatrix}
	 Let $C$ denote a non-singular constant matrix. The last step consists in computing a matrix $e_C$ whose entries are $\K$-linear combinations of the $e_c\logm^j$ such that $\puip(e_C)=Ce_C$. The strategy is explicitly described in \cite{FR25}. Let us recall it briefly. Let $C=DU$ be the multiplicative Dunford-Jordan decomposition of $C$, where $D$ is diagonalizable, $U$ is unipotent, and $D$ and $U$ commute. Set $\logm^{[k]} := \frac{\logm(\logm-1)\cdots(\logm-k+1)}{k!}$ when $k \in \Z_{\geq 0}$. Then, since $(U-{\rm I})^m=0$, the matrix
	$$
	e_U := \sum_{k=0}^{m-1} \logm^{[k]} (U-{\rm I})^k
	$$
	satisfies $\puip(e_U)=Ue_U$. Let $Q \in \GL_m(\K)$ be such that $QDQ^{-1} = {\rm diag}(c_1,\ldots,c_m)$, with $c_1,\ldots,c_m \in \K^\times$. Then, $e_D:=Q{\rm diag}(e_{c_1},\ldots,e_{c_m})Q^{-1}$ satisfies $\puip(e_D)=De_D$. Eventually, since $D$ and $U$ commute, $e_U$ and $e_D$ commute and $e_C:=e_De_U$ satisfies $\puip(e_C)=Ce_C$. This leads to the following algorithm.
	
	\medskip
	\begin{algorithm}[H]
		\label{algo:constante}
		\SetAlgoLined
		\KwIn{A constant matrix $C$ with coefficients in $\K$.}
		\KwOut{The matrix $e_C$.}
		Write $C=DU$\;
		Set $e_U := \sum_{k=0}^{m-1} \logm^{[k]} (U-{\rm I})^k$\;
		Let $Q \in \GL_m(\K)$ be such that $QDQ^{-1} = {\rm diag}(c_1,\ldots,c_m)$ is diagonal\;
		Set $e_D := Q{\rm diag}(e_{c_1},\ldots,e_{c_m})Q^{-1}$\;
		\Return $e_De_U$\;
		\caption{An algorithm to compute a fundamental matrix of solution to a constant system}
	\end{algorithm}
	
	\subsection{On the necessity for the base field to be algebraically closed}\label{sec:alg_clo}
	We claim that, when $\K$ is not algebraically closed, it is not always possible to build a matrix $e_C$ whose entries are linear combinations of some $e_c\logm$ with $c\in \K^\times$. Let us justify this. Over the base field $\mathbb Q$ we consider the matrix
    $$
C:=\begin{pmatrix}
 	0 & 1 \\ -1 & 0 
 	\end{pmatrix}
 	$$
    Algorithm \ref{algo:constante} returns the matrix
    $$
    e_C:=
\begin{pmatrix}
\frac{1}{2}(e_i+e_{-i}) & 
 	\frac{i}{2}(e_{-i}-e_i)  \\ \frac{i}{2}(e_{i}-e_{-i}) &  \frac{1}{2}(e_i+e_{-i})
 	\end{pmatrix}
    $$
    Since the elements $e_c$, $c \in \Q$, are linearly independent over $\Hahn_{\Q}[\logm]$ and since the solution $e_C$ of the system $\puip(Y)=CY$ is unique up to right product by an element of $\GL_m(\Q)$, there exists no such solution whose entries are linear combinations of the elements $e_c\logm^j$, $c \in \mathbb Q$, $j \in \Z_{\geq 0}$.

	\subsection{Description of the main algorithm}
	
	Recall that, to any equation \eqref{eq:Mahler}, we can associate a Mahler system by considering the companion matrix
	$$
	A := \begin{pmatrix}
	& 1 
	\\ && \ddots 
	\\& & & 1
	\\ -\frac{a_0}{a_m} & \cdots & \cdots & - \frac{a_{m-1}}{a_m}\,.
	\end{pmatrix}.
	$$
	Then, finding a basis of solutions to \eqref{eq:Mahler} amounts to computing a fundamental matrix of solutions to the associated system. This is the strategy which justifies Algorithm \ref{algo:basis_solutions}.
	
	\medskip
	
	\begin{algorithm}[H]\label{algo:basis_solutions}
		\SetAlgoLined
		\KwIn{A Mahler equation \eqref{eq:Mahler} and an integer $n$}
		\KwOut{A basis of solution in the form \eqref{eq:basis_solutions}, with a truncation of the series $f_{i,c,j,\omeg}$ up to order $n$}
		Let $A$ denote the companion matrix associated to this equation\;
		Compute $d:=d(A)$\;
		\If{$d\neq 1$}{replace $A$ with $A(z^{d})$\;}
		Let $(\overline{P},\Theta)$ be the output of Algorithm \ref{algo:solutions_blocks} when taking $A$ as input\;
		Compute the coefficients $P_{k}$ of the Puiseux matrix $P$ up to order $dn$ thanks to \eqref{eq:formule_blocks}\;
		Replace $\overline{P}$ with this new matrix\;
		Compute $Q \in \GL_m(\KK)$ such that $Q\Theta Q^{-1}$ is upper triangular\;
		Replace $\overline{P}$ with $\overline{P}Q^{-1}$ and $\Theta$ with $Q\Theta Q^{-1}$\;
		Let $H$ be the output of Algorithm \ref{algo:matrixH} taking $\Theta$ as input\;
		Let $e_C$ be the output of Algorithm \ref{algo:constante} taking the constant term of $\Theta$ as input\;
		\Return the first row of $\overline{P}(z^{1/d})H(z^{1/d})e_C$.
		\caption{An algorithm to compute a basis of solution to a Mahler equation}
	\end{algorithm}
	\begin{rem*}
		In the last step of this algorithm, one has to deal with terms of the form $\bxi_{\omeg}(z^{1/d})$. However, up to replace the tuple $\a$ with $\a/d$, such a term is of the form $\bxi_{\omeg'}$ for some $\omeg'\in \Omega$. Thus, Algorithm \ref{algo:basis_solutions} returns a basis of solutions of the desired form. 
		
		As already mentioned, the decomposition \eqref{eq:basis_solutions} is not unique. Nevertheless, when $\K$ has characteristic $0$, it is possible to make it unique. Consider the set $\Omega_0$ of pairs $\omeg:=(\u,\a)$ where
		\begin{itemize}
			\item $\u$ has a general term of the form $k_1^{\alpha_1}\cdots k_s^{\alpha_s}\lambda_1^{k_1}\cdots \lambda_s^{k_s}$, where $\alpha_1,\ldots,\alpha_s \in \Z_{\geq 0}$ and $\lambda_1,\ldots,\lambda_s \in \KK^\times$;
			\item  $a$ is a tuple of positive rational numbers whose denominator is coprime with $p$ and whose numerator is not divisible by $p$ (recall that $p$ is not necessarily prime).
		\end{itemize}
		A decomposition \eqref{eq:basis_solutions} for which each $\omeg$ belongs to $\Omega_0$ is called \textit{standard} in \cite{FR25} and such a decomposition always exists. Passing from any decomposition to the standard one is not difficult. Details are provided in \cite[Section 5.1.2]{FR25}. 
	\end{rem*}

	\subsection{Computing an equation for each Puiseux series appearing in \eqref{eq:basis_solutions}}
	
	Algorithm \ref{algo:basis_solutions} returns a list $(y_1,\ldots,y_m)$ of solutions of a $p$-Mahler equation \eqref{eq:Mahler_eq}, of the form
	$$
	y_i= \sum_{c \in K_0} \sum_{j =0}^{j_0} \sum_{\omeg \in \Omega_1} f_{i,c,j,\omeg}(z)\bxi_{\omeg} e_c \logm^j
	$$
	where $K_0,\Omega_1$ are finite sets and $f_{i,c,j,\omeg}$ are Puiseux series whose expansion up to $z^n$ is computed. As mentioned in the introduction, each series can be uniquely determined by providing a $p$-Mahler equation it satisfies. The construction is as follows. These Puiseux series $f_{i,c,j,\omeg}$ are explicit $\KK$-linear combinations of the entries of the matrix $P$, the first matrix of an admissible pair $(P,\Theta)$. The equation satisfied by a sum of $p$-Mahler Puiseux series can be easily determined from the equations satisfied by each term. Hence, one only has to compute an equation for each entry of $P$. Let $f_{i,j}$ be the $(i,j)$-th entry of $P$. From $\puip(P)\Theta=AP$ we deduce that for each $k$, 
	$$
	f_{i,j}(z^{p^k})=\left(\puip^{k-1}(A)\cdots \puip(A)AP \Theta^{-1}\puip(\Theta)^{-1}\cdots \puip^{k-1}(\Theta)^{-1}\right)_{i,j}
	$$
	Thus, the $m^2+1$ series $f_{i,j}(z),\ldots,f_{i,j}\left(z^{p^{m^2}}\right)$ are explicit linear combinations of the $m^2$ entries of $P$. One readily deduces a $p$-Mahler equation satisfied by $f_{i,j}$.

	\section{Examples}\label{sec:RudinShapiro}
	
	Let us first run Algorithm \ref{algo:basis_solutions} on an example in characteristic $0$. Consider the Rudin-Shapiro equation
	\begin{equation}\label{eq:RS}
	y(z)+(z-1)y(z^2)-2zy(z^4)=0\,.
	\end{equation}
	Here we have $p=2$ and we may take $\K=\Q$. We run Algorithm \ref{algo:basis_solutions} with this equation as input and the integer $n:=9$. It computes the companion matrix
	$$
	A(z)=\begin{pmatrix}0& 1 \\ \frac{1}{2z} & \frac{z-1}{2z}\end{pmatrix}.
	$$
	associated with \eqref{eq:RS}. The associated Newton polygon is the convex hull of the set $\{(1,i),(2,j),(4,k)\,:\,i,j\geq 0,\,k\geq 1\}$. Its slopes are $0$ and $\frac{1}{2}$ so that $d(A)=1$.
	
	Our algorithm calls Algorithm \ref{algo:solutions_blocks} which returns a pair $(\overline{P},\Theta)$ with
	$$
	\overline{P}=\begin{pmatrix}
	1+z & z \\ 1 & \frac{1}{z}-1+z
	\end{pmatrix},\, \text{ and } \,
	\Theta=\begin{pmatrix}
	1 & \frac{1}{z}-1 \\ 0 & -\frac{1}{2}
	\end{pmatrix}\,.
	$$
	as we will prove in Section \ref{sec:call_algoblocks}. Letting $\overline{P_1},\overline{P_2}$ denote the columns of $\overline{P}$, our algorithm then computes recursively the coefficients $P_{1,n}$ and $P_{2,n}$ for $n\in \{2,\ldots,9\}$ thanks to \eqref{eq:formule_blocks}. Then it replaces the matrix $\overline{P}$ with the matrix
	$$
	\footnotesize{\setlength{\arraycolsep}{3pt} \medmuskip=1mu
		\begin{pmatrix}
		1+z+z^2-z^3+z^4+z^5-z^6+z^7+z^8+z^9 &z - \frac{5}{2}z^2 + \frac{3}{2}z^3 + \frac{5}{4}z^4 - \frac{7}{4}z^5 + \frac{5}{4}z^6 - \frac{1}{4}z^7 - \frac{5}{8}z^8 + \frac{3}{8}z^9 \\ 1+z^2+z^4-z^6+z^7+z^9 &\frac{1}{z} - 1 + z - \frac{3}{2}z^2 + z^3 + \frac{1}{4}z^4 - z^5 + \frac{1}{4}z^6 + z^7 - \frac{13}{8}z^8 + z^9
		\end{pmatrix}}
	$$
	It then calls Algorithms \ref{algo:matrixH} and \ref{algo:constante}, which return (see Sections~\ref{sec:call_algoH} and \ref{sec:call_algoC}) 
	$$H = \begin{pmatrix}
	1 & \bxi_{([-2],1)} \\ 0 & 1
	\end{pmatrix}\, \text{ and } \, e_C=\begin{pmatrix} 1&\frac{2}{3}e_{\frac{-1}{2}}-\frac{2}{3}\\ 0&e_{-\frac{1}{2}}\end{pmatrix}\,,
	$$
	where $[-2]=((-2)^k)_k \in \Q^\Z$. Then, Algorithm \ref{algo:basis_solutions} returns the entries of the first row of $\overline{P}He_C$, that is
	$$
	f_1:=1+z+z^2-z^3+z^4+z^5-z^6+z^7+z^8+z^9
	$$
	and
	$f_2:=-\frac{2}{3}f_1 +\left(f_1\bxi_{([-2],1)}+\frac{2}{3}f_1 +g\right) e_{-\frac{1}{2}}$
	where $g$ is the top-right entry of $\overline{P}$.

	\subsection{Calling Algorithm \ref{algo:solutions_blocks}}\label{sec:call_algoblocks}
	Since $\val A=-1$ and $\val A^{-1}=0$, the parameters~\eqref{eq:nota_val} are
	$\nu_P = -1$, $\nu_\Theta  = -1$, $\nu=-3$, and $\mu=1$. Write $A^{-1}(z) = B_0 + B_1z$ with 
	$$
	B_0:=\begin{pmatrix}
	1 & 0 \\ 1 & 0
	\end{pmatrix} \quad \text{and} \quad B_1:=\begin{pmatrix}
	-1 & 2 \\ 0 & 0 
	\end{pmatrix}\, .
	$$
	One checks that $M=M_0$ and $M_1$ have the following block-decomposition
	$$
	M_0 =\tiny\begin{pmatrix}
	0 & 0 & 0 & 0 & 0 \\
	0 & 0 & B_0 & 0 & 0 \\
	0 & 0 & B_1 & 0 & 0 \\
	0 & 0&0 & B_0 & 0 \\
	0 & 0 &0 & B_1 & 0 
	\end{pmatrix} \quad \normalsize\text{and} \quad 
	\ \normalsize M_{-1}=\tiny\begin{pmatrix}
	0 & 0& B_0 & 0 & 0\\
	0 & 0 & B_1 & 0& 0 \\
	0 & 0 & 0 & B_0 & 0 \\
	0 & 0 & 0 & B_1 & 0 \\
	0 & 0 & 0 & 0 & B_0 \\
	\end{pmatrix}
	$$
	where $0$ denotes the zero matrix of size $2\times 2$.  
	Algorithm \ref{algo:solutions_blocks} is initialized with $\X:=\{0\}$, $\mathcal E:=()$ and $\mathcal U:=()$. \subsubsection*{First iteration of the main `while' loop} Entering the main `while' loop, it sets $\mathfrak U:=\{0\}$, $\mathfrak F:=V_0$ and computes
	\begin{align*}
	\mathfrak G&:=V_0 \cap M^{-1} V_0\cap MV_0 =\{\prescript{\rm t}{}{(0,0,0,0,0,0,\lambda_1,\lambda_1,\lambda_2,0)}\,:\,\lambda_1,\lambda_2\in\QQ\}
	\end{align*}
	\indent $\blacktriangleright$\textit{Execution of the inner `while' loop} 
	\begin{itemize}
		\item Since $\mathfrak G\neq \mathfrak F$, it enters the inner `while' loop and it sets $\mathfrak F:= \mathfrak G$. Then, it computes the new space $\mathfrak G$:
		$$
		\mathfrak G := \mathfrak F \cap M^{-1} \mathfrak F\cap M\mathfrak F = {\rm span}_\QQ\{\e_1\}. 
		$$
		where $\e_1:=\prescript{\rm t}{}{(0,0,0,0,0,0,1,1,1,0)}$.
		\item Since $\mathfrak G\neq \mathfrak F$, it enters this inner loop a second time. The computation at that stage leaves $\mathfrak G$ unchanged. Thus, our algorithm exits this loop.
	\end{itemize}
	
	Since $\X=\mathfrak U=\{0\}$, it sets $\mathfrak Y:=\{0\}$ and $\mathfrak Z:={\rm span}_\QQ\{\e_1\}$. Thus $E_{\mathfrak Y}$ is the empty matrix and $E=E_{\mathfrak Z}=\e_1$. The algorithm appends $\e_1$ to $\mathcal E$ and the zero space to $\mathcal U$ so that $\mathcal E=(\e_1)$ and $\mathcal U=(\{0\})$. Then it sets $\X={\rm span}_\QQ\{\e_1\}$. Since $\dim(\X)=1< 2$, it enters a second time the main `while' loop. 
	
	\subsubsection*{Second iteration of the main `while' loop}
	It sets 
	$\mathfrak U:={\rm span}_{\QQ}(M_0\e_1,M_1\e_1)$ which is the $\QQ$-vector space spanned by $\e_1$ and $\x$
	where
	$$
	\x:=\prescript{\rm t}{}{(0,0,0,0,1,1,1,0,1,1)}\,.
	$$
	Then it sets $\mathfrak F:=V_0$. Then, the first computation of $\mathfrak G$ gives
	$$
	\mathfrak G:=\{\prescript{\rm t}{}{(0,0,0,0,\lambda_0,\lambda_1,\lambda_1+\lambda_2,\lambda_2,\lambda_3,\lambda_1)}\,:\,\lambda_0,\lambda_1,\lambda_2,\lambda_3\in \QQ\}
	$$
	\indent $\blacktriangleright$\textit{Execution of the inner `while' loop} 
	\begin{itemize}
		\item
		Since $\mathfrak G \neq \mathfrak F$, it enters the inner `while' loop and it sets $\mathfrak F:= \mathfrak G$. Then, $\mathfrak G$ becomes
		$$
		\mathfrak G := \{\prescript{\rm t}{}{(0,0,0,0,0,\lambda_1,\lambda_1+\lambda_2,\lambda_2,\lambda_3,\lambda_1)}\,:\,\lambda_1,\lambda_2,\lambda_3\in \QQ\}\,.
		$$
		\item One more iteration of this `while' loop gives $\mathfrak G:=
		{\rm span}_\QQ\{\e_1,\e_2\}$ with
		$$
		\e_2:=\prescript{\rm t}{}{(0,0,0,0,0,1,0,-1,1,1)}.
		$$
		\item Then, one last iteration leaves $\mathfrak G$ unchanged, so that it exits this loop.
	\end{itemize}
	
	Since $\e_1,\e_2$ and $\x$ are linearly independent, we have $\mathfrak Y=\{(0)\}$ and $\mathfrak Z={\rm span}_\QQ\{\e_2\}$. Thus $E_{\mathfrak Y}$ is the empty matrix and $E=E_{\mathfrak Z}=\e_2$. Then, the algorithm appends $\e_2$ to $\mathcal E$ and the space $\mathfrak U$ to $\mathcal U$ so that $\mathcal E=(\e_1,\e_2)$ and $\mathcal U=(\{0\}, {\rm span}_\QQ\{\e_1,\x\})$. Then it sets $\X:={\rm span}_\QQ\{\e_1,\e_2\}$. Since $\dim X=2$, it leaves the main `while' loop.
	
	\subsubsection*{Execution of the `for' loop}
	We have $r=2$. The algorithm enters the `for' loop. All blocks of $\Theta$ will be $1\times 1$ matrices. By some abuse of notation we will write them as elements of $\QQ$. When $j=1$, the algorithm sets $\Theta_1:=1$ and
	$$
	\overline{P_1}=\begin{pmatrix}
	1+z \\ 1 
	\end{pmatrix}\,.
	$$
	In the second iteration of the loop (with $j=2$), it sets $\Theta_2 := -1/2$, $\Theta_{1,2,0}:=-1$, $\Theta_{1,2,-1}:=1$ and
	$$
	\overline{P_2}=\begin{pmatrix}
	z \\ \frac{1}{z}-1+z
	\end{pmatrix}\,.
	$$
	It exits the `for' loop and returns $((\overline{P_1},\overline{P_2}),\Theta)$ where
	$$\Theta:=\begin{pmatrix}
	1 & \frac{1}{z}-1 \\ 0 & -\frac{1}{2}
	\end{pmatrix}\,.
	$$

	\subsection{Calling Algorithm \ref{algo:matrixH}}\label{sec:call_algoH}
	Algorithm \ref{algo:matrixH} takes as input the matrix $\Theta$. The only entry of $H$ this algorithm has to compute is $h_{1,2}$ which satisfies
	$$
	-\frac{1}{2}h_{1,2}(z^2) - h_{1,2}(z)=\frac{1}{z}\,.
	$$
	This is an equation of the form \eqref{eq:inhomogene} with $\kappa:=-\frac{1}{2}$ and $\eta:=\tau:=1$. Thanks to Lemma \ref{lem:Hahn}, it returns $h_{1,2}(z):=\bxi_{([-2],1)}=\sum_{k\geq 1}(-2)^kz^{-1/2^k}$ and 
	$$
	H:= \begin{pmatrix}
	1 & \bxi_{([-2],1)} \\ 0 & 1
	\end{pmatrix}.
	$$
	\subsection{Calling Algorithm \ref{algo:constante}}\label{sec:call_algoC}
	Algorithm \ref{algo:constante} is called with $C=:\begin{pmatrix}
	1 & -1 \\ 0 & -\frac{1}{2}
	\end{pmatrix}$ as an input. The $DU$ decomposition of $C$ is $U:={\rm Id}$ and $D:=C$. After diagonalizing $C$, Algorithm \ref{algo:constante} returns $e_C$ where
	$$
	e_C=
	\begin{pmatrix} 1&1 \\ 0&-\frac{3}{2}\end{pmatrix}  \begin{pmatrix}
	1 & 0 \\ 0 & e_{-\frac{1}{2}}  
	\end{pmatrix}\begin{pmatrix} 1&\frac{2}{3} \\ 0&-\frac{2}{3}\end{pmatrix}=\begin{pmatrix} 1&\frac{2}{3}e_{\frac{-1}{2}}-\frac{2}{3}\\ 0&e_{-\frac{1}{2}}\end{pmatrix}\,.
	$$

	\subsection{An equation for the second Puiseux term}
Algorithm \ref{algo:basis_solutions} returns $f_1$ and $f_2$, which are truncations at order $n=9$ of two solutions $h_1$ and $h_2$ to \eqref{eq:RS}. The solution $h_1$ is a power series solution to \eqref{eq:RS}. By contrast, the solution $h_2$ is of the form 
	$$
	h_2 = -\frac{2}{3}h_1 +\left(h_1\bxi_{([-2],1)}+\frac{2}{3}h_1 +h\right) e_{-\frac{1}{2}}
	$$
	where $h$ is some power series. Precisely, if $P$ is the matrix of Laurent series whose truncation is $\overline{P}$ and who satisfies $\puip(P)\Theta=AP$, then, $h$ is the top-right entry of $P$. Let us explain how to compute a Mahler equation satisfied by $h$. We let $P_{i,j}$ denote the entries of $P$. In particular, $h_1=P_{1,1}$ and $h=P_{1,2}$. Iterating the identity $\puip(P)\Theta=AP$ and isolating the terms $h(z^{2^k})$, $k\in\{0,\ldots,4\}$, we obtain
	$$
	\scriptsize
	\begin{pmatrix}
	h(z)\\h(z^2)\\h(z^4)\\h(z^8)\\h(z^{16})
	\end{pmatrix}\!\!
	=\!\!
	 \fontsize{4}{5}\selectfont
	 \setlength{\arraycolsep}{0.4pt}
	 \left(\begin{array}{c|c|c|c}
	 0 & 1 & 0 & 0 \\[0.8em] \hline 
	 0 & 0 & \frac{2(1-z)}{z} & -2 \\[0.8em] \hline
	 \frac{(z-1)^2}{z^3} & \frac{2}{z} & \frac{(z-1)^3}{z^3} & \frac{2(z-1)}{z} \\[0.8em] \hline
	 \begin{array}{c}
	 \frac{-3z^6+4z^5+z^4}{2z^7} \\
	 \frac{-4z^3+3z^2-1}{2z^7}
	 \end{array} & \frac{2(1-z^2)}{z^3} & \begin{array}{c}
	 \frac{-3z^7+7z^6-9z^5+3z^4}{2z^7} \\
	 \frac{+3z^3-3z^2+z+1}{2z^7}
	 \end{array} & \frac{-2(z^3-z^2+z+1)}{z^3} \\[1.5em] \hline
	 \begin{array}{c}
	 \frac{5z^{14}-8z^{13}-z^{12}+8z^{11}}{4z^{15}} \\
	 \frac{-z^{10}-8z^9+11z^8-8z^7}{4z^{15}} \\
	 \frac{+3z^6-3z^4+z^2+1}{4z^{15}}
	 \end{array} & \frac{2(z^6-z^4+z^2+1)}{z^7} & \begin{array}{c}
	 \frac{5z^{15}-13z^{14}+17z^{13}-7z^{12}}{4z^{15}} \\
	 \frac{-z^{11}-7z^{10}+5z^9-3z^8}{4z^{15}} \\
	 \frac{+3z^7-3z^6+3z^5+3z^4}{4z^{15}} \\
	 \frac{+z^3-z^2-z-1}{4z^{15}}
	 \end{array} & \begin{array}{c}
	 \frac{2z^7-2z^6+2z^5}{z^7} \\
	 \frac{+2z^4+2z^3}{z^7} \\
	 \frac{-2z^2-2z-2}{z^7}
	 \end{array}
	 \end{array}\right)
\scriptsize\!\!
	\begin{pmatrix}
	P_{1,1}(z)\\ P_{1,2}(z) \\P_{2,1}(z)\\ P_{2,2}(z)
	\end{pmatrix}
	$$

	Looking at a vector in the left-kernel of this matrix, we get
	
	 {\medmuskip=1mu
	\thickmuskip=2mu
	\thinmuskip=1mu
	\footnotesize
	\begin{align*}
(-2z^{10}-6z^{9}-8z^{8}-8z^{7}-4z^{6}+4z^{5}+8z^{4}+8z^{3}+6z^{2}+2z)&h(z)
		\\ + 
	(-z^{11}-z^{9}-2z^{8}+2z^{7}+2z^{5}+4z^{4}-z^{3}-z-2)&h(z^2)
	\\ +(z^{13}+z^{12}+z^{11}+3z^{10}+5z^{9}+5z^{8}-2z^{7}-6z^{6}-z^{5}-z^{4}-3z^{3}-z^{2}-z-1)&h(z^4)
	\\+( 	-z^{13}-z^{12}+3z^{11}+3z^{10}-4z^{7}-4z^{6}+z^{5}+z^{4}+z^{3}+z^{2})&h(z^8)
	\\ +(-2z^{13}-2z^{12}+2z^{11}+2z^{10})&h(z^{16})=0,
	\end{align*}}
	which is a $2$-Mahler equation satisfied by $h$.

    \subsection{Carlitz zeta-function} Let us briefly present an example in positive characteristic.
    Let $p\geq 0$ be a prime number, $\theta$ be a transcendental element over $\mathbb F_p$ and $\K$ denote the complement of the algebraic closure of $\mathbb F_p((\theta^{-1}))$ for the valuation associated to $\theta^{-1}$. As mentioned in Section~\ref{sec:motivations}, the values at $1$ of the Carlitz zeta-function, $\zeta_C$ is equal to $f(\theta)$ where $f(z) \in \K[[z]]$ satisfies
	$$
	f(z^p) + (z^p - \theta) f(z) = - (z^p - \theta)\,.
	$$
    Thus, $f(z)$ satisfies the following Mahler equation, of order $2$:
    \begin{equation}\label{eq:Carlitz}
    (z^p-\theta)(z^{p^2}-\theta)f(z) - (z^p - \theta - 1)(z^{p^2}-\theta)f(z^p) - (z^p-\theta)f(z^{p^2})=0\,.
    \end{equation}
    One easily checks that the Newton polygon associated to this equation has only one slope which is null. Thus $\nu_\Theta=\nu_P=\nu=\mu=0$. The matrix $M$ is then given by the constant term of the inverse of the matrix of this system, that is
    $$
    M=\begin{pmatrix}
        1 + \theta^{-1} & - \theta^{-1} \\ 1 & 0
    \end{pmatrix}
    $$
    Thus, it acts as an isomorphism on $V_0:=\K^2$. Then, Algorithm \ref{algo:solutions_blocks} returns
    $$
    \overline{P}=\begin{pmatrix}
        1 & 1 \\ 1 & \theta
    \end{pmatrix}  \quad \text{and} \quad \Theta=\begin{pmatrix}
        1 & 0 \\ 0 & \theta
    \end{pmatrix}.
    $$
    Algorithm \ref{algo:matrixH} returns $H={\rm I}_2$ and Algorithm \ref{algo:constante} returns
    $$
   e_C=\begin{pmatrix}
        1 & 0 \\ 0 & e_{\theta}
    \end{pmatrix}
    $$
    Thus, a basis of solutions of \eqref{eq:Carlitz} is given by $f(z)$ and $g(z)e_\theta$ where
     where $f(z)$ is defined as above and $g(z)$ satisfies
     $$
    (z^p-\theta)(z^{p^2}-\theta)g(z) - \theta (z^p - \theta - 1)(z^{p^2}-\theta)g(z^p) - \theta^2 (z^p-\theta)g(z^{p^2})=0\,.
    $$

\end{document}